\newcommand{\E}{\mathbb{E}}
\renewcommand{\Pr}{\mathbb{P}}
\newtheorem{thm}{Theorem}
\newtheorem{lemma}{Lemma}
\newtheorem{prop}{Proposition}
\theoremstyle{definition}
\newtheorem{assump}{Assumption}
\newtheorem{defn}{Definition}
\newtheorem{example}{Example}
\newtheorem{remark}{Remark}
\title{\LARGE \bf
Information Compression in Dynamic Information Disclosure Games
}
\author{Dengwang Tang and Vijay G. Subramanian% <-this % stops a space
%\thanks{Full paper at \url{https://arxiv.org/pdf/2304.04346.pdf} \textcolor{red}{(replace link)}.}% <-this % stops a space
\thanks{A version of this work is submitted to IEEE CDC 2024.}
\thanks{This work is supported by NSF Grant No. ECCS 1750041, ECCS 2038416, ECCS 1608361, CCF 2008130, ARO Award No. W911NF-17-1-0232, MIDAS Sponsorship Funds by General Dynamics, and Okawa Foundation Research Grant.}% <-this % stops a space
\thanks{D. Tang is with Ming Hsieh Department of Electrical and Computer Engineering,
        University of Southern California, Los Angeles, CA 90089, USA
        {\tt\small dengwang@usc.edu}}%
\thanks{V. G. Subramanian is with the Department of Electrical Engineering and Computer Science, University of Michigan,
        Ann Arbor, MI 48109, USA
        {\tt\small vgsubram@umich.edu}}%
}
\begin{document}

\maketitle

\thispagestyle{plain}  % DEBUG: show page numbers so that i don't have to hand count
\pagestyle{plain}  % DEBUG: show page numbers so that i don't have to hand count

%%%%%%%%%%%%%%%%%%%%%%%%%%%%%%%%%%%%%%%%%%%%%%%%%%%%%%%%%%%%%%%%%%%%%%%%%%%%%%%%
\begin{abstract}
 We consider a two-player dynamic information design problem between a principal and a receiver---a game is played between the two agents on top of a Markovian system controlled by the receiver's actions, where the principal obtains and strategically shares some information about the underlying system with the receiver in order to influence their actions. In our setting, both players have long-term objectives, and the principal sequentially commits to their strategies instead of committing at the beginning. Further, the principal cannot directly observe the system state, but at every turn they can choose randomized experiments to observe the system partially. The principal can share details about the experiments to the receiver. For our analysis we impose the \emph{truthful disclosure} rule: the principal is required to truthfully announce the details and the result of each experiment to the receiver immediately after the experiment result is revealed. Based on the received information, the receiver takes an action when its their turn, with the action influencing the state of the underlying system. We show that there exist Perfect Bayesian equilibria in this game where both agents play Canonical Belief Based (CBB) strategies using a compressed version of their information, rather than full information, to choose experiments (for the principal) or actions (for the receiver). We also provide a backward inductive procedure to solve for an equilibrium in CBB strategies.
\end{abstract}

%%%%%%%%%%%%%%%%%%%%%%%%%%%%%%%%%%%%%%%%%%%%%%%%%%%%%%%%%%%%%%%%%%%%%%%%%%%%%%%%
\section{INTRODUCTION}
In many modern engineering and socioeconomic problems and systems, such as cyber-security, transportation networks, and e-commerce, information asymmetry is an inevitable aspect that crucially impacts decision making. In these systems, agents need to decide on their actions under limited information about the system and each other. In many situations, agents can overcome (some of) the information asymmetry by communicating with each other. However, agents can be unwilling to share information when agents' goals are not aligned with each other, since having some information that another agent does not know can be an advantage. In general, communication between agents with diverging incentives cannot be naturally established without rules/protocols that everyone agrees upon, and all agents suffer due to the breakdown of the information exchange. For example, drug companies are required by regulations to disclose their trial results truthfully. The public can then trust the results and benefit from the drug. In turn, the drug companies can make a profit. Without government regulations, drug companies and the public will both suffer due to mistrust. In many real-world dynamic systems, information exchange and decision making can happen repeatedly as the system/environment changes over time---for example, public companies disclose information periodically which impacts stockholders' decisions; (COVID-19) vaccine producers conduct their trials and release results sequentially which impacts the government's purchasing decisions; during an epidemic, health authorities update their recommendations on the use of face masks over time according to changing levels of infection, etc. Therefore, in the face of information asymmetry, it is important to establish rules/protocols to facilitate repeated information exchange among agents in multi-agent dynamic systems. 

In the economics literature, there are mainly two approaches to the above problem, namely mechanism design \cite{myerson1989mechanism} and information design \cite{kamenica2011bayesian}. In mechanism design, less informed agents can extract information from more informed agents by committing to how they will use the collected information beforehand. Whereas in information design, more informed agents can partially disclose information to less informed agents. The more informed agents commit on the manner in which they partially disclose their information. In both approaches, all agents can benefit from the information exchange.
For both approaches, one can classify the pertinent literature into two groups: (i) static settings, where both information disclosure and decision making take place only once; and (ii) dynamic settings, where agents repeatedly disclose information and take actions over time on top of an ever changing environment/physical system. Mechanism design and information design for the dynamic settings are more challenging than the static settings since agents need to anticipate future information disclosure when taking an action. Mechanism design in dynamic settings has been studied extensively in the literature \cite{bergemann2010dynamic,athey2013efficient,pavan2014dynamic,bergemann2019dynamic}.
In most of the works on information design in dynamic settings, the receivers are assumed to be myopic \cite{lingenbrink2017optimal,ely2017beeps,farokhi2016estimation,sayin2016strategic,renault2017optimal,best2016honestly,best2016persuasion}. This assumption greatly simplifies receivers' decision making. There have been a few papers studying information design problems where all agents in the system have long-term goals \cite{LUH1984251,4047637,tolwinski1981closed,farhadi2018static,8619619,sayin2018dynamic,sayin2019hierarchical,sayin2019optimality,meigs2020optimal,tavafoghi2017informational,farhadi2022dynamic}. These papers typically assume that the principal commits to their strategy for the whole game before the game starts as in a Stackelberg equilbrium~\cite{heinrich2011market}. The bulk of this literature also assumes that the principal observes the underlying state perfectly. However, these assumptions can be inappropriate for many applications. If the protocol gives more informed agents the power to commit to a strategy for the whole time horizon at the beginning of time, then the more informed agent can implement punishment strategies by threatening to withhold information if less informed agents do not obey their ``instructions''---see Example \ref{ex:did:threatbased}. Thus, the informed agents could abuse their commitment power to implement otherwise non-credible threats instead of using it for efficient information disclosure. This is not a desirable outcome---for example, online map services should not threaten to withhold service if a driver refuses to take the recommended route; and similarly, public health authorities may want to use persuasion instead of threats to encourage mask wearing during an epidemic. Again, focusing on the public health setting, during an epidemic the authorities may not know the full extent of the disease spread, but only an estimate of it (using testing and other methods). In this context, transparency to the public on the part of the authorities---in disclosing measurement methods and data---is important for persuasion based schemes to be effective.

In this work, we focus on the dynamic information design problem. Specifically, we consider a dynamic game between a principal and a receiver on top of a Markovian system. Both the principal and the receiver have long-term objectives. The principal cannot directly (and perfectly) observe~\footnote{Generalizing to direct and perfect observations is technically challenging as existence of Nash equilibria and sequential refinements thereof, with or without compressing information, for infinite state or action dynamic games~\cite{myerson2020perfect} is non-trivial or may not hold in great generality.} the system state, but can choose randomized experiments to partially observe the system. The principal is also allowed to choose any experiment, but they must announce the experimental setup and results truthfully to the receiver before the receiver takes their action. Both these aspects of our model are motivated by the public health setting described earlier. The receiver takes action on each turn based on the information received to date, which then influences the underlying system. 
% We show that there exist equilibria in this game where both agents play canonical belief based (CBB) strategies that a compressed version of their information, rather than the full information, to choose experiments (for the principal) or actions (for the receiver). We also provide a backward inductive procedure for solving for an equilibria in CBB strategies. We investigate examples of such games to provide insight to CBB-strategy-based equilibria. 

\textbf{Contributions:} In the class of dynamic information disclosure games among a principal and a receiver discussed above, under the assumption of truthful disclosure, we identify compression based strategies, called Canonical Belief Based (CBB) strategies, for both players to play at equilibrium. Here both agents use strategies based on a compressed version of their information, rather than the full information, to choose their actions---experiments (for the principal), and actions (for the receiver). We develop a backward inductive sequential decomposition procedure to find such equilibrium strategies, and we show that the procedure always has at least one solution. Finally, we investigate examples of such games to provide insight to CBB-strategy-based equilibria. 

\textbf{Organization:}
The rest of this work is organized as follows: In Section \ref{sec:did:motex}, we provide an example where a principal can abuse its commitment power. We formulate the problem in Section \ref{sec:dyninfo:setting}. Then, we provide some preliminary results in discrete geometry in Section \ref{sec:did:geom}. In Section \ref{sec:did:mainresults} we state our main results. In Section \ref{sec:did:examples} we study some examples. We discuss difficulties with potential extensions of our result in Section \ref{sec:did:dis}. Finally, we conclude in Section \ref{sec:concl}. Details of all the technical proofs are in the Appendix.
%\cite[Appendix E]{tangthesis}.

\textbf{Notation:} We use capital letters to represent random variables, bold capital letters to denote random vectors, and lower case letters to represent realizations. We use superscripts to indicate teams and agents, and subscripts to indicate time. We use $t_1:t_2$ to indicate the collection of timestamps $(t_1, t_1+1, \cdots, t_2)$. For example $X_{1:3}^1$ stands for the random vector $(X_1^1, X_2^1, X_3^1)$. For random variables or random vectors, we use the corresponding script capital letters (italic capital letters for greek letters) to denote the space of values these random vectors can take. For example, $\mathcal{H}_t^i$ denotes the space of values the random vector $H_t^i$ can take. We use $\Pr(\cdot)$ and $\E[\cdot]$ to denote probabilities and expectations, respectively. We use $\Delta(\varOmega)$ to denote the set of probability distributions on a finite set $\varOmega$.

\subsection{A Motivating Example}\label{sec:did:motex}
The following is an example where given the power to commit to a strategy for the whole game before the game starts (i.e. the Stackelberg game setting), the principal can use otherwise non-credible threats.
%The following is an example where the principal can commit to otherwise non-credible threats if given the power to commit the strategy for the whole game before the game starts.

\begin{example}\label{ex:did:threatbased}
	Consider a two-stage game of two players: the principal $P$, and the agent/receiver $R$. The state of the system at time $t$ is $X_t$. The states are uncontrolled, and $X_1, X_2$ are i.i.d. uniform random variables taking values in $\{0, 1\}$. The principal can observe $X_t$ at time $t$ while the receiver cannot. At stage $t$, The principal transmits message $M_t$ to the receiver and the receiver takes an action $U_t\in  \{a, b, c, d\}$. The instantaneous payoff for both players are given by
	\begin{align*}
		r_1^A(0, a) = 1, r_1^A(0, b) = 1.01, r_1^A(0, c) = r_1^A(0, d) = -1000\\
		r_1^A(1, c) = 1, r_1^A(1, d) = 1.01, r_1^A(1, a) = r_1^A(1, b) = -1000\\
		r_1^B(0, a) = 500, r_1^B(0, b) = 1, r_1^B(0, c) = r_1^B(0, d) = -1000\\
		r_1^B(1, c) = 500, r_1^B(1, d) = 1, r_1^B(1, a) = r_1^B(1, b) = -1000
	\end{align*}
	and $r_2^A(\cdot, \cdot)=r_2^B(\cdot, \cdot)=r_1^A(\cdot, \cdot)$.
	
	Suppose that the principal has the power to commit to a strategy $(g_1, g_2)$ at the beginning of the game. Then, (given the Stackelberg setting) an optimal  strategy for the principal is the following: fully reveal the state at $t=1$ (i.e. $M_1=X_1$); if the receiver plays $a$ or $c$ at $t=1$, then transmit no information at $t=2$; and if the receiver plays $b$ or $d$ at $t=1$, then fully reveal the state at $t=2$. Then, the receiver's best response to the principal's strategy is the following: at $t=1$: play $b$ if $M_1=0$, and play $d$ if $M_1=1$; and at time $t=2$: play $a$ if $M_2=0$, and play $c$ if $M_2=1$. 
 
    In the resulting equilibrium, the principal effectively \emph{threatens} the receiver to comply to their interest at time $t=1$ by not giving information at time $t=2$, even though the interests of both parties are aligned at $t=2$. In fact, without posing a threat to the receiver at time $2$, the principal cannot convince them to play $b$ or $d$ at time $1$.
\end{example}

\section{PROBLEM FORMULATION}\label{sec:dyninfo:setting}
We consider a finite-horizon two players dynamic game between the principal $A$ and the agent/receiver $B$. The game consists of $T$ stages, where, in each stage, the principal moves before the receiver. The game features an underlying dynamic system with state $X_t$.
At each time $t\in [T]$, the receiver chooses an action $U_t$. Then, the system transits to the next state $X_{t+1}\sim P_t(X_t, U_t)$, where $P_t:\mathcal{X}_t\times\mathcal{U}_t\mapsto \Delta(\mathcal{X}_{t+1})$ is the transition kernel. The initial state $X_1$ has prior distribution $\hat{\pi}\in\Delta(\mathcal{X}_1)$. The initial distribution $\hat{\pi}$ and transition kernels $P = (P_t)_{t=1}^{T}$ are common knowledge to both players.
We assume that neither player can observe the state $X_t$ directly. However, at each time $t$, the principal can conduct an \emph{experiment} to learn about $X_t$. In this work, an experiment\footnote{Information design literature~\cite{gentzkow2017bayesian,gentzkow2017disclosure} deems such experiments \emph{signals}.} refers to an observation kernel that can be chosen by the principal. We impose the rule that the experiments are required to be public---both the principal and the receiver know the settings (the probabilities in the observation kernel), and the outcome (the observation itself) of the experiment. Specifically, at each time $t$, the principal chooses an observation kernel $\sigma_t:\mathcal{X}_t\mapsto \Delta(\mathcal{M}_t)$, and announces $\sigma_t$ to the receiver. The experiment outcome $M_t$ is then realized, and observed by both the principal and the receiver. Note that if the horizon $T=1$, then the above setting is the same as the classical information design problem considered in \cite{kamenica2011bayesian}.

\begin{assump}
	$\mathcal{X}_t, \mathcal{U}_t, \mathcal{M}_t$ are finite sets with $|\mathcal{M}_t|$ sufficiently large.
\end{assump}

The order of events happening at time $t$ is given as the following:
(1) The principal commits to an experiment $\sigma_t$, and announces it to the receiver;
(2) The measurement result $M_t$ is revealed to both the principal and receiver; 
(3) The receiver takes action $U_t$; and 
(4) $X_t$ transits to the next state.

Let $\mathcal{S}_t$ be the space of experiments. The principal uses a (pure) strategy to choose their experiment $g_t^A: \mathcal{S}_{1:t-1} \times \mathcal{M}_{1:t-1} \times \mathcal{U}_{1:t-1} \mapsto \mathcal{S}_t$. For convenience, define $\mathcal{H}_t^A=\mathcal{S}_{1:t-1} \times \mathcal{M}_{1:t-1} \times \mathcal{U}_{1:t-1}$.
The receiver uses a (pure) strategy $g_t^B: \mathcal{S}_{1:t} \times \mathcal{M}_{1:t} \times \mathcal{U}_{1:t-1} \mapsto \mathcal{U}_t$. For convenience, define $\mathcal{H}_t^B=\mathcal{S}_{1:t} \times \mathcal{M}_{1:t} \times \mathcal{U}_{1:t-1}$.
The principal's goal is to maximize $J^A(g) = \E^g\left[\sum_{t=1}^{T} r_t^A(X_t, U_t) \right]$. The receiver's goal is to maximize $J^B(g)=\E^g\left[\sum_{t=1}^{T} r_t^B(X_t, U_t) \right]$. The instantaneous reward functions $(r_t^A, r_t^B)_{t=1}^T$ are common knowledge to both agents.

The belief of the principal at time $t$ is a function $\mu_t^A: \mathcal{M}_{1:t-1} \times \mathcal{S}_{1:t-1} \times \mathcal{U}_{1:t-1} \mapsto \Delta(\mathcal{X}_{1:t})$. The belief of the receiver at time $t$ (after knowing $\sigma_t$ and observing $M_t$) is a function $\mu_t^B: \mathcal{M}_{1:t} \times \mathcal{S}_{1:t} \times \mathcal{U}_{1:t-1} \mapsto \Delta(\mathcal{X}_{1:t})$.

Inspired by the ``mechanism picking game'' defined in \cite{doval2018mechanism}, we call the above game a \emph{signal picking game}, and we will study Perfect Bayesian Equilibria for our game. 
\begin{defn}[PBE]
	A Perfect Bayesian Equilibrium is a pair $(g, \mu)$, where
	\begin{itemize}
		\item $g$ is sequentially rational given $\mu$ ($=\big(\mu^A_{1:T},\mu^B_{1:T}\big)$).
		\item $\mu$ can be updated using Bayes law whenever the denominator is non-zero.
	\end{itemize}
\end{defn}

\section{BACKGROUND: DISCRETE GEOMETRY}\label{sec:did:geom}
In this section, we introduce some notations and results of discrete geometry that are necessary for our main results. 

\begin{defn}\label{def:did:piecewiselinear}
	Let $f$ be a real-valued function on a polytope\footnote{A polytope is a convex hull of a finite set in $\mathbb{R}^d$ where $d < +\infty$.} $\Omega$. Then, $f$ is called a (continuous) piecewise linear function if there exist polytopes $C_1, \cdots, C_k$ such that
	\begin{itemize}
		\item $f$ is linear on each $C_j$ for $j=1,\cdots,k$; and
		\item $C_1\cup\cdots \cup C_k = \Omega$.
	\end{itemize}
\end{defn}

\begin{lemma}\label{lem:did:plcompose}
	Let $\Omega_1, \Omega_2$ be polytopes. Let $\ell:\Omega_1\mapsto \Omega_2$ be an affine function and $f:\Omega_2\mapsto \mathbb{R}$ be a piecewise linear function. Then the composite function $f\circ \ell: \Omega_1\mapsto \mathbb{R}$ is piecewise linear.
\end{lemma}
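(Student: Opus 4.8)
The plan is to prove Lemma \ref{lem:did:plcompose} by pulling back, through $\ell$, the polytopal pieces on which $f$ is linear. Fix polytopes $C_1, \dots, C_k \subseteq \Omega_2$ as furnished by Definition \ref{def:did:piecewiselinear}, so that $f$ restricted to each $C_j$ is affine. (I read ``linear'' in Definition \ref{def:did:piecewiselinear} in the usual discrete-geometry sense of ``affine''; this is the right reading here, since an affine map composed with an affine map is again affine, whereas a strictly linear map composed with an affine map need not be.) Define $D_j := \ell^{-1}(C_j) \cap \Omega_1 = \{x \in \Omega_1 : \ell(x) \in C_j\}$ for $j = 1, \dots, k$. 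I claim the $D_j$ witness that $f \circ \ell$ is piecewise linear.

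First I would check the covering property $D_1 \cup \dots \cup D_k = \Omega_1$: since $\ell$ maps $\Omega_1$ into $\Omega_2 = C_1 \cup \dots \cup C_k$, every $x \in \Omega_1$ has $\ell(x) \in C_j$ for some $j$, hence $x \in D_j$. Next I would verify that $f \circ \ell$ is affine on each $D_j$: on $D_j$ we have $\ell(x) \in C_j$, and $f|_{C_j}$ agrees with some affine map $y \mapsto \langle a_j, y\rangle + b_j$; writing $\ell$ as the restriction of an ambient affine map $\ell(x) = Lx + c$, the composite is $x \mapsto \langle a_j, Lx + c\rangle + b_j = \langle L^\top a_j, x\rangle + (\langle a_j, c\rangle + b_j)$, which is affine on $D_j$. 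Finally I would argue that each $D_j$ is a polytope: write $C_j$ in halfspace form $C_j = \{y : A_j y \le b_j\}$ (a polytope, being a bounded polyhedron, admits such a representation); then $D_j = \{x \in \Omega_1 : A_j L x \le b_j - A_j c\}$ is the intersection of the polytope $\Omega_1$ with finitely many halfspaces, hence a bounded polyhedron, i.e., a polytope. (Empty $D_j$, if any, may simply be discarded, or viewed as trivial polytopes.) Combining these three properties with Definition \ref{def:did:piecewiselinear} yields the claim.

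The only step requiring any real care — and the one I would regard as the crux — is showing that $D_j$ is a polytope, which rests on passing between the vertex (convex-hull) description of polytopes used in Definition \ref{def:did:piecewiselinear} and the halfspace description needed to take affine preimages; the boundedness of $\Omega_1$ is exactly what guarantees $D_j$ is bounded and hence a genuine polytope rather than an unbounded polyhedron. The ``linear versus affine'' reading of Definition \ref{def:did:piecewiselinear} is a second, purely terminological, point worth flagging so the composition argument goes through cleanly.
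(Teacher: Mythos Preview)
Your proposal is correct and follows essentially the same approach as the paper: pull back the polytopal pieces $C_j$ of $f$ via $\ell$ to obtain $D_j = \ell^{-1}(C_j)$, then verify covering, affineness on each piece, and that each $D_j$ is a polytope. In fact, your argument is more careful than the paper's, which simply asserts that the preimages $D_j$ are polytopes without invoking the halfspace description or the boundedness of $\Omega_1$; your explicit treatment of this point (and of the linear/affine terminology) fills in details the paper leaves implicit.
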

\begin{proof}
    See Appendix \ref{app:disgeo}.
\end{proof}

% ====================

Next, we introduce the notion of a triangulation. 

\begin{defn}
	\cite{de2010triangulations} Let $\Omega$ be a finite dimensional polytope. A \emph{triangulation} $\gamma$ of $\Omega$ is a finite collection of simplices (i.e. convex hulls of a finite, affinely independent set of points) such that 
 \begin{enumerate}
     \item[(1)] If a simplex $C \in\gamma$, then all faces of $C$ are in $\gamma$; 
     \item[(2)] For any two simplices $C_1, C_2\in \gamma$, $C_1\cap C_2$ is a (possibly empty) face of $C_1$; and 
     \item[(3)] The union of all simplices in $\gamma$ equals $\Omega$.
 \end{enumerate}
\end{defn}

\begin{figure}[!ht]
	\centering
	\includegraphics[width=6cm]{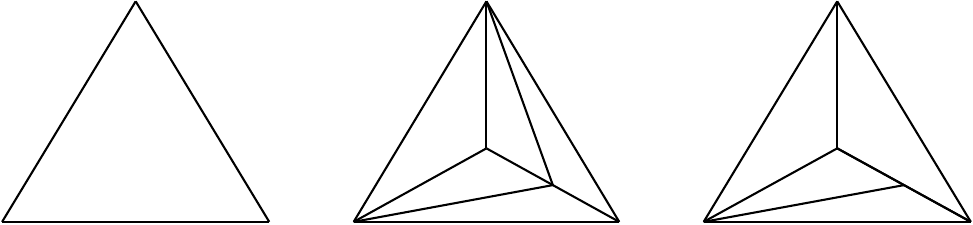}
	\caption{Left: 2-D Polytope $\Omega$; Center: A triangulation; Right: NOT a triangulation.}
\end{figure}

For a function $f:\Omega\mapsto\mathbb{R}$ and a triangulation $\gamma$, let $\mathbb{I}(f, \gamma)$ denote the linear interpolation of $f$ based on the triangulation $\gamma$, i.e.
\begin{equation}
	\mathbb{I}(f, \gamma)(\omega) := \alpha_1 f(\omega_1) + \cdots + \alpha_k f(\omega_k),
\end{equation}
if $\omega\in C$, where $C\in \gamma$ is a simplex with vertices $\omega_1, \cdots, \omega_k$, and $\omega = \alpha_1 \omega_1 + \cdots + \alpha_k \omega_k$ for some $\alpha_1, \cdots, \alpha_k \geq 0$ such that $\alpha_1+ \cdots +\alpha_k = 1$.

\begin{figure}[!ht]
	\centering
	\includegraphics[width=6cm]{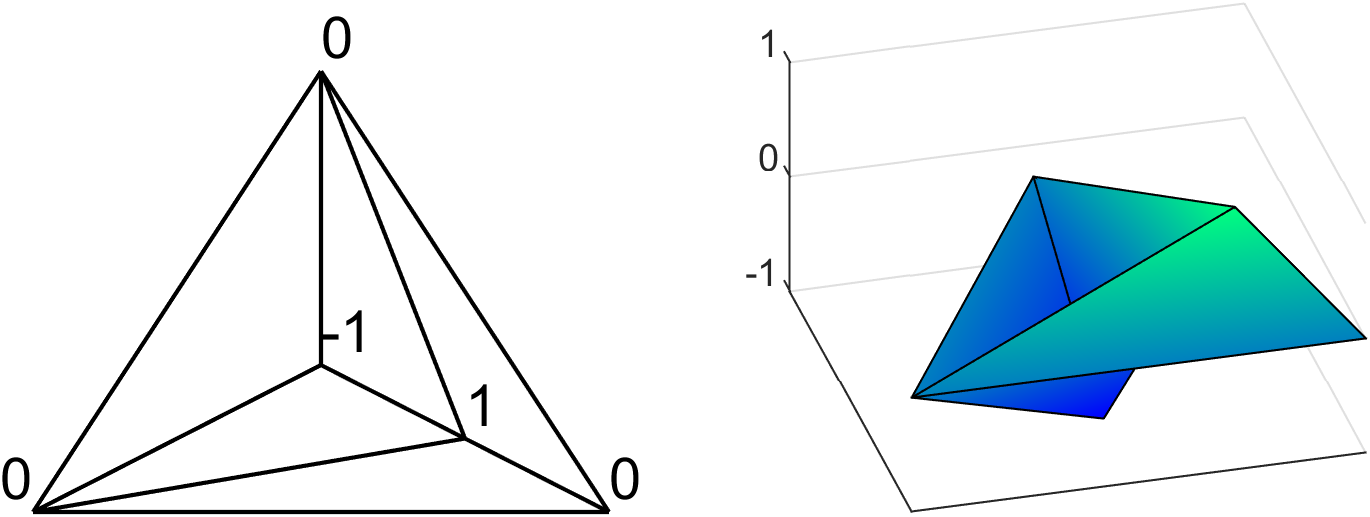}
	\caption{Left: A triangulation $\gamma$ labeled with the values of a function $f$ on the vertices. Right: 3-D plot of $\mathbb{I}(f, \gamma)$.}
\end{figure}

\begin{lemma}\label{lem: intpcont}
	For any real-valued function $f$ on a polytope $\Omega$, $\mathbb{I}(f, \gamma)$ is a well-defined, continuous piecewise linear function.
\end{lemma}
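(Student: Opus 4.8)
The plan is to verify, in this order, that (i) the displayed formula assigns to every $\omega\in\Omega$ a value that does not depend on the choices made in applying it, (ii) the resulting function is affine on each simplex of $\gamma$, which by Definition \ref{def:did:piecewiselinear} makes it piecewise linear, and (iii) it is continuous.

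For (i), note first that by axiom (3) of a triangulation every $\omega\in\Omega$ lies in at least one simplex $C\in\gamma$, so the formula applies; and if $C$ has vertices $\omega_1,\dots,\omega_k$, then affine independence of these vertices makes the barycentric coordinates $\alpha_1,\dots,\alpha_k$ with $\alpha_i\ge 0$, $\sum_i\alpha_i=1$, $\omega=\sum_i\alpha_i\omega_i$ unique, so $\mathbb{I}(f,\gamma)(\omega)$ is unambiguous once $C$ is fixed. The substantive point is independence of the choice of $C$. I would first record the elementary fact that if $F$ is a face of a polytope $P$, $x\in F$, and $x=\sum_j\beta_j y_j$ with $y_j\in P$, $\beta_j>0$, $\sum_j\beta_j=1$, then each $y_j\in F$ (peel off one term at a time, writing $x=\beta_1 y_1+(1-\beta_1)\big(\sum_{j\ge2}\tfrac{\beta_j}{1-\beta_1}y_j\big)$ and invoking the definition of a face). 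Applied with $P=C$, this shows that the barycentric coordinates of $\omega$ in $C$ are supported on the vertices of any face of $C$ containing $\omega$. Now suppose $\omega\in C_1\cap C_2$ with $C_1,C_2\in\gamma$. By axiom (2), applied in both orders, $F:=C_1\cap C_2$ is a common face of $C_1$ and of $C_2$; since every face of a simplex is the simplex spanned by the subset of its vertices lying in that face, the vertex set of $F$ (its set of extreme points) is contained in the vertex sets of both $C_1$ and $C_2$. Combining with the support fact, the barycentric coordinates of $\omega$ computed in $C_1$, in $C_2$, and in $F$ all coincide (each is the unique barycentric representation of $\omega$ in $F$, padded by zeros on the remaining vertices), so the interpolated values agree. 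Hence $\mathbb{I}(f,\gamma)$ is well defined.

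For (ii) and (iii), fix a simplex $C\in\gamma$ with vertices $\omega_1,\dots,\omega_k$. The map $L_C\big(\sum_i\beta_i\omega_i\big):=\sum_i\beta_i f(\omega_i)$, ranging over all real $\beta_i$ with $\sum_i\beta_i=1$, is a well-defined affine function on the affine hull of $C$ because the $\omega_i$ are affinely independent; it extends to an affine function on the ambient $\mathbb{R}^d$, and by construction its restriction to $C$ equals $\mathbb{I}(f,\gamma)$. Thus $\mathbb{I}(f,\gamma)$ is affine---hence ``linear'' in the sense of Definition \ref{def:did:piecewiselinear}---on each of the finitely many polytopes $C\in\gamma$, whose union is $\Omega$; this is exactly the definition of a piecewise linear function. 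For continuity, each $C\in\gamma$ is compact and $L_C$ is continuous, so $\mathbb{I}(f,\gamma)|_C$ is continuous; given $\omega^{(n)}\to\omega$ in $\Omega$, finiteness of $\gamma$ lets us extract a subsequence lying in a single simplex $C$, which is closed so $\omega\in C$, whence $\mathbb{I}(f,\gamma)(\omega^{(n)})=L_C(\omega^{(n)})\to L_C(\omega)=\mathbb{I}(f,\gamma)(\omega)$, the last equality using well-definedness from (i). Since every subsequence has a further subsequence along which the values converge to $\mathbb{I}(f,\gamma)(\omega)$, the whole sequence converges, giving continuity.

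The main obstacle is step (i)---consistency of the definition across simplices sharing a face---because it is the only place where the face-intersection axiom of a triangulation is genuinely used; steps (ii) and (iii) are then routine bookkeeping about affine functions on simplices plus a standard compactness/pasting argument.
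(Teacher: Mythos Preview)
Your proof is correct and follows essentially the same approach as the paper: well-definedness via the face-intersection axiom (showing the barycentric representation in any containing simplex is forced to coincide with the unique one in the common face $C_1\cap C_2$), then linearity on each simplex to conclude piecewise linearity. You supply more detail than the paper---the explicit ``support lies in the face'' lemma and a separate sequential continuity argument---whereas the paper simply asserts linearity on each simplex and reads off continuity from the definition; but the skeleton is the same.
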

\begin{proof}
    See Appendix \ref{app:disgeo}.
\end{proof}

For each $\omega\in\Omega$ and triangulation $\gamma$, we have shown that there exists a unique way to represent $\omega$ as a convex combination of the vertices of one simplex from $\gamma$. One can treat this convex combination as a finite measure. Denote this finite measure by $\mathbb{C}(\omega, \gamma)$. Then we have $\mathbb{I}(f, \gamma)(\omega) = \int f(\cdot) \mathrm{d}\mathbb{C}(\omega, \gamma)$.

% =======================

\begin{defn}
	Let $f$ be a real-valued function on $\Omega$. The concave closure $\mathrm{cav}(f)$ of $f$ is defined as a function $\rho$ such that  
	\begin{equation}
		\rho(\omega) := \sup \{z: (\omega, z)\in \mathrm{cvxg}(f) \} \quad\forall \omega\in\Omega
	\end{equation}
	where $\mathrm{cvxg}(f)\subset \Omega \times \mathbb{R}$ is the convex hull of the graph of $f$.
\end{defn}

For certain functions $f$, their concave closures can be represented as a triangulation based interpolation of the original function. Define the set of all such triangulations as $\arg\mathrm{cav}(f)$, i.e.
\begin{equation}
\begin{split}
	%&~
 \arg\mathrm{cav}(f) %\\
 := %& 
 \{\gamma \text{ is a triangulation of }\Omega : \mathbb{I}(f, \gamma) = \mathrm{cav}(f) \}.
\end{split}
\end{equation} 

\begin{figure}[!ht]
	\centering
	\includegraphics[height=2cm]{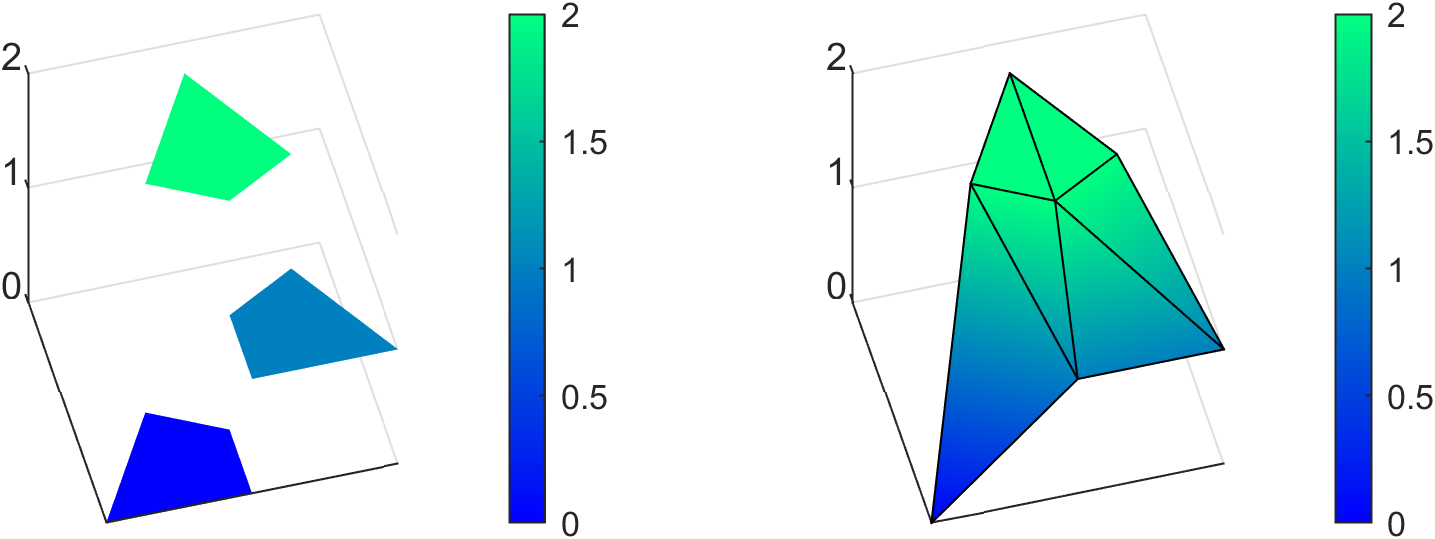}\\\vspace{1em}
	\includegraphics[height=1.9cm]{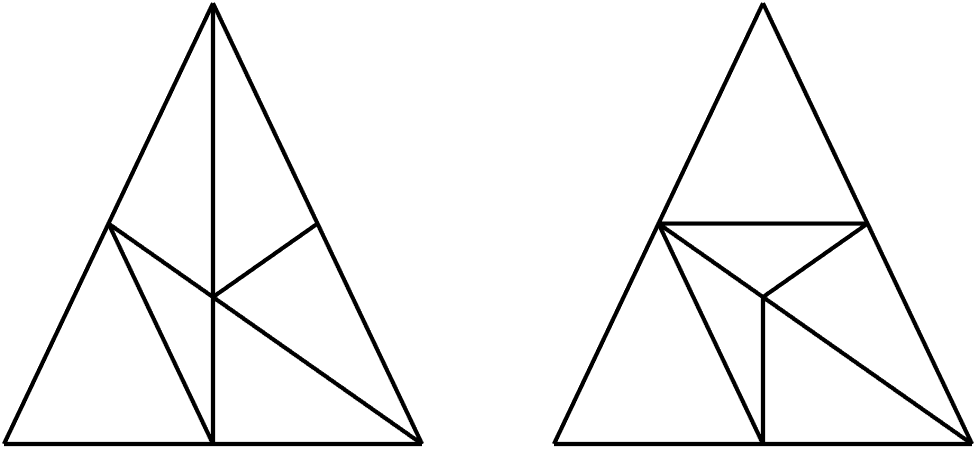}
	\caption{Top-left: 3-D plot of a function $f$ (an upper semi-continuous piecewise constant function taking values in $\{0, 1, 2\}$). Top-right: Concave closure of $f$. Bottom-left and bottom-right: 2-D visualization of two different triangulations in $\arg\mathrm{cav}(f)$.}
\end{figure}

The following lemma identifies a class of functions with the above property.

\begin{lemma}\label{lem:did:cavmax}
	Let $f_1, \cdots, f_k, \rho_1, \cdots, \rho_k$ be continuous piecewise linear functions on a polytope $\Omega$. For $\omega\in\Omega$, define
	\begin{align*}
		\Upsilon(\omega) = \underset{j=1,\cdots, k}{\arg\max} f_j(\omega), \text{ and }%\\
		\Psi(\omega) = \max_{j\in \Upsilon(\omega)} \rho_j(\omega).
	\end{align*}
	Then $\arg\mathrm{cav}(\Psi)$ is non-empty, i.e. there exists a triangulation $\gamma$ of $\Omega$ such that the concave closure of $\Psi$ is equal to $\mathbb{I}(\Psi, \gamma)$.
\end{lemma}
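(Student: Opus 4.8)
The plan is to show that $\Psi$ is upper semi-continuous and piecewise affine over a finite polyhedral subdivision of $\Omega$, that $\mathrm{cav}(\Psi)$ coincides with the upper envelope of the polytope $\mathcal{P}:=\mathrm{conv}\big(\overline{\mathrm{graph}(\Psi)}\big)\subseteq\Omega\times\mathbb{R}$, and that the polyhedral subdivision of $\Omega$ induced by the upper faces of $\mathcal{P}$ can be refined to a triangulation $\gamma$ without introducing new vertices; that $\gamma$ will satisfy $\mathbb{I}(\Psi,\gamma)=\mathrm{cav}(\Psi)$.

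First I would establish that $\Psi$ is upper semi-continuous. Since $\Upsilon(\cdot)$ is the argmax correspondence of the continuous functions $f_1,\dots,f_k$, it has closed graph: if $\omega_n\to\omega_0$ and $j\in\Upsilon(\omega_n)$ for all $n$, then $f_j(\omega_0)\ge f_i(\omega_0)$ for all $i$, i.e. $j\in\Upsilon(\omega_0)$. Now take $\omega_n\to\omega_0$ and, for each $n$, an index $j_n\in\Upsilon(\omega_n)$ attaining $\rho_{j_n}(\omega_n)=\Psi(\omega_n)$; along a subsequence $j_n$ is a fixed $j$, so $j\in\Upsilon(\omega_0)$ and $\limsup_n\Psi(\omega_n)=\lim_n\rho_j(\omega_n)=\rho_j(\omega_0)\le\Psi(\omega_0)$. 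This is where the particular $\arg\max$/$\max$ structure of $\Psi$ is essential: for a generic piecewise affine function the conclusion of the lemma is false. Next, using that every $f_j$ and $\rho_j$ is continuous piecewise linear, I would intersect the linearity complexes of all the $f_j$ and $\rho_j$ with the hyperplane arrangements $\{f_{j_1}=f_{j_2}\}$ and $\{\rho_{j_1}=\rho_{j_2}\}$ to obtain a finite polyhedral subdivision $\{C_1,\dots,C_r\}$ of $\Omega$ such that, on the relative interior of each $C_a$, the set $\Upsilon$ is constant and one fixed $\rho_{j^{*}}$ dominates the rest within it, so $\Psi$ agrees on $\mathrm{relint}(C_a)$ with a single affine function $\ell_a$. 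Hence $\overline{\mathrm{graph}(\Psi)}=\bigcup_a\{(\omega,\ell_a(\omega)):\omega\in C_a\}$ is a finite union of polytopes and $\mathcal{P}$ is a polytope that projects onto $\Omega$.

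Then I would identify $\mathrm{cav}(\Psi)$ with the upper envelope $\omega\mapsto\max\{z:(\omega,z)\in\mathcal{P}\}$. One direction is immediate from $\mathrm{cvxg}(\Psi)\subseteq\mathcal{P}$; for the other, any $(\omega,z)\in\mathcal{P}$ is by Carathéodory a convex combination $\sum_i\alpha_i(\omega_i,\ell_{a_i}(\omega_i))$ of points of $\overline{\mathrm{graph}(\Psi)}$, and upper semi-continuity forces $\ell_{a_i}(\omega_i)\le\Psi(\omega_i)$ (approach $\omega_i$ within $\mathrm{relint}(C_{a_i})$), whence $z\le\sum_i\alpha_i\Psi(\omega_i)\le\mathrm{cav}(\Psi)(\omega)$. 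Thus $\mathrm{cav}(\Psi)$ is concave piecewise linear, and the projections of the upper faces of $\mathcal{P}$ form a polyhedral subdivision $\{R_1,\dots,R_m\}$ of $\Omega$ on each cell of which $\mathrm{cav}(\Psi)$ is affine. The crucial observation is that every vertex $v$ of every $R_s$ satisfies $\Psi(v)=\mathrm{cav}(\Psi)(v)$: such $v$ is the projection of a vertex $(v,z)$ of $\mathcal{P}$ on the upper boundary, hence a vertex of some $\{(\omega,\ell_a(\omega)):\omega\in C_a\}$, so $z=\ell_a(v)$, and approaching $v$ within $\mathrm{relint}(C_a)$ and using upper semi-continuity gives $\ell_a(v)\le\Psi(v)\le\mathrm{cav}(\Psi)(v)=z=\ell_a(v)$.

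Finally I would invoke the classical fact (see \cite{de2010triangulations}, e.g.\ pulling/placing triangulations) that any polyhedral subdivision of a polytope refines to a triangulation using no new vertices, applied to $\{R_1,\dots,R_m\}$, producing a triangulation $\gamma$ of $\Omega$. Each simplex of $\gamma$ lies in some $R_s$, so $\mathrm{cav}(\Psi)$ is affine on it, and each vertex of $\gamma$ is a vertex of some $R_s$, hence a point where $\Psi=\mathrm{cav}(\Psi)$; therefore, for $\omega=\sum_i\alpha_i\omega_i$ in a simplex of $\gamma$, $\mathbb{I}(\Psi,\gamma)(\omega)=\sum_i\alpha_i\Psi(\omega_i)=\sum_i\alpha_i\,\mathrm{cav}(\Psi)(\omega_i)=\mathrm{cav}(\Psi)(\omega)$, so $\gamma\in\arg\mathrm{cav}(\Psi)$. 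The main obstacle is the third paragraph: one must pass to the closure $\overline{\mathrm{graph}(\Psi)}$ to make $\mathcal{P}$ a polytope, and the delicate point is that this does not create an ``upper'' vertex of $\mathcal{P}$ above a point where $\Psi$ lies strictly below its concave closure — precisely what upper semi-continuity of $\Psi$ forbids, and the reason the $\arg\max$ structure is indispensable.
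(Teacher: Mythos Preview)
Your approach is essentially the same as the paper's: build a polytope from (a version of) the graph of $\Psi$, show its upper envelope is $\mathrm{cav}(\Psi)$, and project a triangulation of the upper boundary down to $\Omega$. The paper implements this with a \emph{finite point set} $\mathcal{B}=\{(\omega,\Psi(\omega)):\omega\in\mathcal{P}\}$, where $\mathcal{P}$ is the vertex set of a refinement very close to your $\{C_a\}$; it then takes a point-set triangulation of $\mathcal{B}$, restricts it to the upper hull, and projects. Your route via $\overline{\mathrm{graph}(\Psi)}$ and the explicit upper semi-continuity argument is a nice conceptual repackaging of the same mechanism---the paper never names u.s.c.\ but uses the equivalent combinatorial fact that on each cell $F\subset A_{j^*}$ one has $j^*\in\Upsilon(\omega_i)$ and hence $\rho_{j^*}(\omega_i)\le\Psi(\omega_i)$ at every vertex $\omega_i$ of $F$.

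There is one technical slip you should patch. The equality $\overline{\mathrm{graph}(\Psi)}=\bigcup_a\{(\omega,\ell_a(\omega)):\omega\in C_a\}$ fails if $\{C_a\}$ contains only the top-dimensional cells of your arrangement. Take $\Omega=[0,1]$, $f_1(\omega)=\omega$, $f_2(\omega)=1-\omega$, $f_3\equiv\tfrac12$, $\rho_1=\rho_2\equiv 0$, $\rho_3\equiv 1$: then $\Psi\equiv 0$ except $\Psi(\tfrac12)=1$, your two top-dimensional cells are $[0,\tfrac12]$ and $[\tfrac12,1]$ with $\ell_a\equiv 0$ on each, and the isolated graph point $(\tfrac12,1)$ is missed. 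This breaks both your claim that $\mathcal{P}$ is a polytope and your identification of its upper vertices with vertices of some $P_a$. The fix is easy and consistent with your own machinery: either (i) include all cells of the polyhedral \emph{complex} (lower-dimensional faces too), so the relative interiors genuinely partition $\Omega$ and your equality becomes correct; or (ii) keep $\mathcal{P}=\mathrm{conv}(\overline{\mathrm{graph}(\Psi)})$ but argue directly that any extreme point $(v,z)$ of $\mathcal{P}$ lies in $\overline{\mathrm{graph}(\Psi)}$, whence u.s.c.\ gives $z\le\Psi(v)\le\mathrm{cav}(\Psi)(v)=z$. Either way the rest of your argument goes through unchanged.
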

\begin{proof}
    See Appendix \ref{app:disgeo}.
\end{proof}

\section{MAIN RESULTS}\label{sec:did:mainresults}
In this section, we introduce our main result---Theorem \ref{thm:dyninfodesign}---, which provides a dynamic programming characterization of a subset of PBE of the signal picking game.

Note that due to the assumption of public experiments, the signal picking game is a game with symmetric information~\footnote{As mentioned earlier---in Footnote 1---, there are significant challenges is generalizing to asymmetric information settings.} after each experiment is conducted. The principal's advantages lies in the fact that they have the power to determine the choice of experiments. Thus, standard results on strategy-independence of beliefs (e.g. \cite{kumar1986stochastic}) imply that the beliefs of both players in this game are strategy-independent, i.e. there is a canonical belief system. Similar strategy-independent belief systems are also constructed and used in \cite{nayyar2013common}. We describe this belief system as follows. 

\begin{defn}\label{def:dyninfo:bayesupdate}
	Define the Bayesian update function $\xi_t: \Delta(\mathcal{X}_t)\times \mathcal{S}_t\times \mathcal{M}_t\mapsto \Delta(\mathcal{X}_t)$ by setting for each $x_t\in \mathcal{X}_t$
	\begin{equation}
		\xi_t(x_{t}|\pi_t, \sigma_t, m_t) := \dfrac{\pi_{t}(x_t) \sigma_t(m_t|x_t) }{\sum_{\tilde{x}_t} \pi_{t}(\tilde{x}_t) \sigma_t(m_t|\tilde{x}_t) }
	\end{equation}
	for all $(\pi_t, \sigma_t, m_t)$ such that the denominator is non-zero. When the denominator is zero, $\xi_t(\pi_t, \sigma_t, m_t)$ is defined to be the uniform distribution.
    %\textcolor{red}{(and the uniform distribution on $\mathcal{X}_t$ when zero)}. 
\end{defn}

\begin{defn}\label{def:canonicalbelief}
	The canonical belief system is a collection of functions $(\kappa_t^A, \kappa_t^B)_{t\in\mathcal{T}}, \kappa_t^i: \mathcal{H}_t^i \mapsto \Delta(\mathcal{X}_t), i\in\{A, B\}$ defined recursively through the following step. Denote $\pi^i_t=\kappa_t^i(h_t^i), i\in\{A, B\}, t\in\mathcal{T}$, and then we have
	\begin{itemize}
		\item $\pi_1^A:=\hat{\pi}$, the prior distribution of $X_1$;
		\item $\pi_{t}^B:= \xi_t(\pi_t^A, \sigma_t, m_t)$;
		\item $\pi_{t+1}^A := \ell_t(\pi_{t}^B, u_{t})$, where $\ell_t: \Delta(\mathcal{X}_t)\times \mathcal{U}_t \mapsto \Delta(\mathcal{X}_{t+1})$ is defined by
		\begin{equation}
			\ell_t(\pi_t, u_{t})(x_{t+1}) := \sum_{\tilde{x}_t}  \pi_t(\tilde{x}_t) P_t(x_{t+1}|\tilde{x}_t, u_t).
		\end{equation}
	\end{itemize}
\end{defn}

We consider a subclass of strategies for both the principal and the receiver, called canonical belief based (CBB) strategies, wherein player $i\in \{A,B\}$ chooses their experiment or action, respectively, at time $t$ based solely on $\Pi_t^i=\kappa_t^i(H_t^i)$ instead of $H_t^i$. Let $\lambda_t^A: \Delta(\mathcal{X}_t) \mapsto \mathcal{S}_t$ be the CBB strategy of the principal, and $\lambda_t^B:\Delta(\mathcal{X}_t) \mapsto \mathcal{U}_t$ be the CBB strategy of the receiver. Then, saying that player $i$ is using CBB strategy $\lambda_t^i$ is equivalent to saying that they are using the strategy
\begin{equation*}
	g_t^i(h_t^i) = \lambda_t^i(\kappa_t^i(h_t^i)),\qquad\forall h_t^i\in\mathcal{H}_t^i.
\end{equation*}

Given an experiment and a distribution on the state, the posterior belief of the receiver is a random variable (a function of the random outcome). In an information disclosure game, it is helpful to consider the following sub-problem: how to design an experiment such that the receiver's  belief, as a random variable, follows a certain distribution. The next definition formalizes this concept. This concept was used in classical one-shot information design setting \cite{kamenica2011bayesian} as well.

\begin{defn}
	\cite{kamenica2011bayesian} An experiment $\sigma_t\in \mathcal{S}_t$ is said to \emph{induce a distribution} $\eta\in \Delta_f(\Delta(\mathcal{X}_t))$---that is, $\eta$ is a distribution with finite support on the set of distributions $\Delta(\mathcal{X}_t)$---\emph{from} $\pi_t\in \Delta(\mathcal{X}_t)$ \cite{kamenica2011bayesian} if for all $\tilde{\pi}_{t}\in \Delta(\mathcal{X}_t)$, 
	\begin{equation}
		\eta(\tilde{\pi}_{t}) = \sum_{\tilde{m}_{t}}  \bm{1}_{\{\tilde{\pi}_{t} = \xi_{t}(\pi_{t}, \sigma_{t}, \tilde{m}_{t}) \} } \sum_{\tilde{x}_{t}}\sigma_{t}(\tilde{m}_{t}|\tilde{x}_{t}) \pi_{t}(\tilde{x}_{t}).
	\end{equation}
	
	A distribution $\eta$ is said to be \emph{inducible from} $\pi_t$ if there exists some experiment $\sigma_{t}$ that induces $\eta$ from $\pi_t$. 
\end{defn}

\begin{remark}
	In \cite{kamenica2011bayesian}, the authors showed that a distribution is $\eta\in \Delta_f(\Delta(\mathcal{X}_t))$ is inducible from $\pi_t$ if and only if $\pi_t$ is the center of mass of $\eta$, i.e. $\pi_t = \sum_{\tilde{\pi}_t\in \mathrm{supp}(\eta)} \eta(\tilde{\pi}_t)\cdot \tilde{\pi}_t$. %\textcolor{red}{Center of mass formula!}
\end{remark}

We now introduce our main result, which describes a backward induction procedure to find a PBE where both players use CBB strategies.

\begin{thm}\label{thm:dyninfodesign}
	Let
	\begin{align*}
		V_{T+1}^A(\cdot) = V_{T+1}^B(\cdot) := 0
	\end{align*}
	
	For each $t=T, T-1, \cdots, 1$ and $\pi_t\in \Delta(\mathcal{X}_t)$, define
	\begin{subequations}\label{eq:dyninfo:dynprog}
		\begin{align}
			\hat{q}_t^i(\pi_t, u_t) &:= \sum_{\tilde{x}_t} r_t^i(\tilde{x}_t, u_t)\pi_t(\tilde{x}_t)  + V_{t+1}^i(\ell_t(\pi_t, u_t))\\&\qquad\qquad\quad \forall i\in \{A, B\};\label{eq:dyninfo:dynprog:1}\\
			\Upsilon_t(\pi_t) &:= \underset{u_t}{\arg\max}~ \hat{q}_t^B(\pi_t, u_t);\label{eq:dyninfo:dynprog:2} \\
			\hat{v}_t^A(\pi_t) &:= \underset{u_t\in \Upsilon(\pi_t)}{\max}~ \hat{q}_t^A(\pi_t, u_t);\label{eq:dyninfo:dynprog:vta}\\
			\hat{v}_t^B(\pi_t) &:= \max_{u_t} \hat{q}_t^B(\pi_t, u_t);\label{eq:dyninfo:dynprog:vtb}\\ 
			\gamma_t &\in \arg\mathrm{cav}(\hat{v}_t^A);\label{eq:dyninfo:dynprog:4}\\
			V_t^i(\pi_t) &:= \mathbb{I}(\hat{v}_t^i, \gamma_t)\quad \forall i\in \{A, B\}.\label{eq:dyninfo:dynprog:5}
		\end{align}
	\end{subequations}
	Let $\lambda_t^{*B}(\pi_t)$ be any $u_t\in\mathcal{U}_t$ that attains the maximum in \eqref{eq:dyninfo:dynprog:vta}.
	Let $\lambda_t^{*A}(\pi_t)$ be any experiment that induces the finite measure $\mathbb{C}(\pi_t, \gamma_t)$ from $\pi_t$. Then, the CBB strategies $(\lambda^{*A}, \lambda^{*B})$ form (the strategy part of) a PBE, and $V_1^A(\hat{\pi})$ and $V_1^B(\hat{\pi})$ are the equilibrium payoffs for the principal and the receiver respectively in this PBE.
\end{thm}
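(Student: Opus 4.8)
The plan is to prove this by backward induction on $t$, establishing at each step that the constructed value functions $V_t^A, V_t^B$ are continuous piecewise linear (so that the recursion is well-defined at the next step), and that playing the CBB strategies $(\lambda_t^{*A}, \lambda_t^{*B})$ from time $t$ onward is sequentially rational at every history, with continuation payoffs given by $V_t^A(\pi_t), V_t^B(\pi_t)$ evaluated at the canonical belief. The base case $t = T+1$ is trivial since $V_{T+1}^i \equiv 0$ is (vacuously) piecewise linear. For the inductive step, I would first check that all the objects in \eqref{eq:dyninfo:dynprog} are well-defined piecewise linear functions: $\hat q_t^i$ is piecewise linear because $\pi_t \mapsto \sum_{\tilde x_t} r_t^i(\tilde x_t, u_t)\pi_t(\tilde x_t)$ is linear, $\ell_t(\cdot, u_t)$ is affine, $V_{t+1}^i$ is piecewise linear by the induction hypothesis, and Lemma \ref{lem:did:plcompose} gives that the composition is piecewise linear; $\hat v_t^B = \max_{u_t} \hat q_t^B$ is a max of finitely many piecewise linear functions, hence piecewise linear; $\hat v_t^A$ (the max of $\hat q_t^A$ over the argmax set $\Upsilon_t$) is exactly of the form $\Psi$ in Lemma \ref{lem:did:cavmax} with $f_j = \hat q_t^B(\cdot, u_j)$ and $\rho_j = \hat q_t^A(\cdot, u_j)$, so $\arg\mathrm{cav}(\hat v_t^A)$ is non-empty and $\gamma_t$ is a legitimate choice; and finally $V_t^i = \mathbb I(\hat v_t^i, \gamma_t)$ is continuous piecewise linear by Lemma \ref{lem: intpcont}. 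This closes the piecewise-linearity part of the induction.

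Next I would establish sequential rationality. Fix a history $h_t^A$ with canonical belief $\pi_t^A = \kappa_t^A(h_t^A)$. Because experiments are public and beliefs are strategy-independent (the canonical belief system of Definition \ref{def:canonicalbelief}), the continuation game starting at time $t$ is a symmetric-information game whose ``state'' for payoff purposes is $\pi_t^A$. I would argue by a one-shot-deviation-type argument: given the induction hypothesis that from time $t+1$ on the players' CBB strategies yield continuation values $V_{t+1}^A, V_{t+1}^B$ as functions of $\pi_{t+1}^A = \ell_t(\pi_t^B, u_t)$, the receiver at time $t$, facing belief $\pi_t^B$, maximizes $\hat q_t^B(\pi_t^B, u_t)$ over $u_t$; hence any selection $\lambda_t^{*B}(\pi_t^B) \in \arg\max_{u_t}\hat q_t^B(\pi_t^B, u_t)$ is a best response, and by construction $\lambda_t^{*B}$ also selects within $\Upsilon_t$ to attain $\hat v_t^A$, giving the receiver value $\hat v_t^B(\pi_t^B)$. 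Then, anticipating the receiver's response, the principal at belief $\pi_t^A$ chooses an experiment; inducing the distribution $\mathbb C(\pi_t^A, \gamma_t)$ over posteriors $\pi_t^B$ yields expected principal payoff $\int \hat v_t^A \, \mathrm d\mathbb C(\pi_t^A, \gamma_t) = \mathbb I(\hat v_t^A, \gamma_t)(\pi_t^A) = V_t^A(\pi_t^A)$, and by the Kamenica--Gentzkow characterization (the Remark following the inducibility definition, noting $\pi_t^A$ is the barycenter of $\mathbb C(\pi_t^A,\gamma_t)$, so $\lambda_t^{*A}$ exists given $|\mathcal M_t|$ large enough), this is actually the best the principal can do: any experiment induces some distribution $\eta$ with barycenter $\pi_t^A$, yielding principal payoff $\int \hat v_t^A \, \mathrm d\eta \le \mathrm{cav}(\hat v_t^A)(\pi_t^A) = V_t^A(\pi_t^A)$. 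This simultaneously shows the principal cannot profitably deviate and that the receiver value along the equilibrium path is $\int \hat v_t^B \, \mathrm d\mathbb C(\pi_t^A, \gamma_t) = V_t^B(\pi_t^A)$, completing the induction.

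Two technical points need care. First, the one-shot-deviation principle requires that a deviation at time $t$ by the principal, leading to an off-path receiver belief $\pi_t^B$, is still met by $\lambda_t^{*B}(\pi_t^B)$ which is a best response there — this is fine because CBB strategies are defined on all of $\Delta(\mathcal X_t)$ and the induction hypothesis applies at every belief, not just on-path ones; and the belief part of the PBE is pinned down as the canonical belief system, which is Bayes-consistent wherever the denominator in $\xi_t$ is nonzero (and the zero-denominator convention in Definition \ref{def:dyninfo:bayesupdate} handles the rest). Second, one must verify that the deviation bound for the principal genuinely uses $\mathrm{cav}$ and not just the pointwise max: the principal's payoff from inducing $\eta$ is $\mathbb E_\eta[\hat v_t^A(\pi_t^B)]$ only because the receiver's best-response continuation value to belief $\pi_t^B$ is exactly $\hat v_t^A(\pi_t^B)$ when the receiver breaks ties in the principal's favor — here I would invoke that $\lambda_t^{*B}$ is chosen precisely to attain the max in \eqref{eq:dyninfo:dynprog:vta}, so the claimed payoff is attainable, and no larger payoff is attainable because for any tie-breaking the receiver's action lies in $\Upsilon_t(\pi_t^B)$ so the principal gets at most $\hat v_t^A(\pi_t^B)$ pointwise, hence at most $\mathrm{cav}(\hat v_t^A)(\pi_t^A)$ in expectation over any barycentric $\eta$.

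The main obstacle I anticipate is the sequential-rationality argument at off-path histories combined with the tie-breaking subtlety in \eqref{eq:dyninfo:dynprog:vta}: one must argue carefully that the specific PBE being constructed has the receiver break indifferences in the principal's favor in a way that is itself sequentially rational (it is, since the receiver is indifferent), and that this makes the principal's optimization problem at time $t$ reduce exactly to concavification of $\hat v_t^A$ — with the value $\hat v_t^B$ that the receiver actually collects being the co-determined quantity $\mathbb I(\hat v_t^B, \gamma_t)$ rather than something the receiver separately optimizes. Getting the bookkeeping right so that both players' incentive constraints hold simultaneously under the single triangulation $\gamma_t$ is the crux.
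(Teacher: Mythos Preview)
Your proposal is correct and hits all the essential ideas—piecewise-linearity preservation via Lemmas~\ref{lem:did:plcompose}, \ref{lem: intpcont}, \ref{lem:did:cavmax}; the receiver's best response as $\arg\max \hat q_t^B$; and the principal's problem reducing to Kamenica--Gentzkow concavification of $\hat v_t^A$. The paper's proof differs organizationally in two ways worth noting. First, it separates well-definedness (your piecewise-linearity induction) into a standalone Proposition~\ref{lem: welldefined}, keeping the Theorem~\ref{thm:dyninfodesign} proof focused purely on sequential rationality. Second, and more substantively, rather than a one-shot-deviation argument, the paper fixes one player's CBB strategy and shows explicitly that the other player then faces a genuine MDP: with the principal fixed at $\lambda^{*A}$, the receiver's state is $\Pi_t^B$ with transition kernel $\mathbb C(\ell_t(\pi_t^B,u_t),\gamma_{t+1})$; with the receiver fixed at $\lambda^{*B}$, the principal's state is $\Pi_t^A$ with action $\Sigma_t$. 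Standard MDP dynamic programming then yields optimality against \emph{arbitrary} history-dependent deviations in one stroke, bypassing any need to invoke or verify the one-shot-deviation principle. Your approach is equivalent for finite horizons, but the MDP reduction is cleaner on the bookkeeping you flag at the end—once the problem is a single-agent MDP, sequential rationality at every history (on- or off-path) is immediate from the Bellman verification, and the tie-breaking worry dissolves because $\lambda_t^{*B}$ is simply the fixed environment the principal optimizes against.
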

\noeqref{eq:dyninfo:dynprog:1,eq:dyninfo:dynprog:2,eq:dyninfo:dynprog:vta,eq:dyninfo:dynprog:vtb,eq:dyninfo:dynprog:4,eq:dyninfo:dynprog:5}

\begin{proof}[Proof Outline]
    In Lemma \ref{lem:canonicalbelief} in Appendix \ref{app:main} we construct a belief system $\mu^*$ that is consistent with any strategy profile. Hence, we only need to show sequential rationality of $\lambda^*$. 
    
    To show the receiver's sequential rationality, we prove the following: Fixing the principal's strategy to be $\lambda^{*A}$, the receiver is facing an MDP with state $\Pi_t^B$ and action $U_t$. The proof then follows via standard stochastic control arguments.

    To show the principal's sequential rationality, we prove the following: Fixing the receiver's strategy to be $\lambda^{*B}$, the principal is facing an MDP with state $\Pi_t^A$ and action $\Sigma_t$. This proof follows cia standard stochastic control arguments coupled with information design results~\cite{kamenica2011bayesian}.

    The details of the proof are presented in Appendix \ref{app:main}.
\end{proof}

The following proposition states that the sequential decomposition procedure described in Theorem \ref{thm:dyninfodesign} is well defined and always has a solution.
\begin{prop}\label{lem: welldefined}
	There always exists a CBB strategy profile $(\lambda^{*A}, \lambda^{*B})$ that satisfies Eqs. \eqref{eq:dyninfo:dynprog} in Theorem \ref{thm:dyninfodesign}.
\end{prop}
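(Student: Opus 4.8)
The plan is to prove Proposition \ref{lem: welldefined} by backward induction on $t$, showing that at each stage the quantities in \eqref{eq:dyninfo:dynprog} are well-defined and that in particular $\hat{v}_t^A$ and $\hat{v}_t^B$ are continuous piecewise linear functions on $\Delta(\mathcal{X}_t)$, so that the earlier discrete-geometry results apply. The induction hypothesis at time $t+1$ would be: $V_{t+1}^A$ and $V_{t+1}^B$ are continuous piecewise linear functions on $\Delta(\mathcal{X}_{t+1})$. The base case $t = T+1$ is immediate since $V_{T+1}^A = V_{T+1}^B \equiv 0$ is trivially piecewise linear.

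For the inductive step, first I would examine \eqref{eq:dyninfo:dynprog:1}. The map $\pi_t \mapsto \sum_{\tilde x_t} r_t^i(\tilde x_t, u_t)\pi_t(\tilde x_t)$ is linear (hence piecewise linear) in $\pi_t$; the map $\ell_t(\cdot, u_t): \Delta(\mathcal{X}_t)\mapsto\Delta(\mathcal{X}_{t+1})$ is affine by its definition; so by the induction hypothesis and Lemma \ref{lem:did:plcompose}, $\pi_t \mapsto V_{t+1}^i(\ell_t(\pi_t, u_t))$ is piecewise linear, and therefore $\hat q_t^i(\cdot, u_t)$ is continuous piecewise linear for each fixed $u_t$ and each $i$. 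Next, since $\mathcal{U}_t$ is finite, $\hat v_t^B(\pi_t) = \max_{u_t} \hat q_t^B(\pi_t, u_t)$ is a pointwise maximum of finitely many continuous piecewise linear functions, hence continuous piecewise linear. For $\hat v_t^A$ as defined in \eqref{eq:dyninfo:dynprog:vta}, I would invoke Lemma \ref{lem:did:cavmax} directly: taking $f_j = \hat q_t^B(\cdot, u_j)$ and $\rho_j = \hat q_t^A(\cdot, u_j)$ over the finite enumeration $\{u_1,\dots,u_k\}$ of $\mathcal{U}_t$, the function $\Psi$ in that lemma is precisely $\hat v_t^A$, the set $\Upsilon(\omega)$ there is exactly $\Upsilon_t(\pi_t)$ from \eqref{eq:dyninfo:dynprog:2}, and the lemma guarantees $\arg\mathrm{cav}(\hat v_t^A) \ne \emptyset$. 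This makes the choice of $\gamma_t$ in \eqref{eq:dyninfo:dynprog:4} possible. Then $V_t^i = \mathbb{I}(\hat v_t^i, \gamma_t)$ is well-defined and continuous piecewise linear by Lemma \ref{lem: intpcont}, closing the induction. Finally, $\lambda_t^{*B}(\pi_t)$ exists because the max in \eqref{eq:dyninfo:dynprog:vta} is over the nonempty finite set $\Upsilon_t(\pi_t)$, and $\lambda_t^{*A}(\pi_t)$ exists because $\mathbb{C}(\pi_t,\gamma_t)$ is a convex combination of vertices of a simplex in $\gamma_t$ whose barycenter is $\pi_t$, so by the Kamenica--Gentzkow characterization (quoted in the Remark after the inducibility definition) it is induced by some experiment $\sigma_t \in \mathcal{S}_t$ — here the assumption that $|\mathcal{M}_t|$ is sufficiently large is what ensures such a $\sigma_t$ lies in $\mathcal{S}_t$ (the support of $\mathbb{C}(\pi_t,\gamma_t)$ has size at most $\dim\Delta(\mathcal{X}_t)+1 = |\mathcal{X}_t|$, so $|\mathcal{M}_t|\ge|\mathcal{X}_t|$ suffices).

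The main obstacle, and the only step that is not a routine verification, is the application of Lemma \ref{lem:did:cavmax} — establishing that $\hat v_t^A$, which is a selection of $\hat q_t^A$ restricted to the argmax set of $\hat q_t^B$, actually has a triangulation realizing its concave closure. Fortunately that lemma has already been proved in Section \ref{sec:did:geom} in exactly the form needed, so in the write-up I would simply need to match notation carefully (identifying $\Omega = \Delta(\mathcal{X}_t)$, the $f_j$'s with the receiver's $Q$-functions and the $\rho_j$'s with the principal's $Q$-functions) and confirm the continuous-piecewise-linearity hypotheses of that lemma are met, which is guaranteed by the induction hypothesis together with Lemma \ref{lem:did:plcompose}. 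A secondary point worth stating explicitly is that $\Delta(\mathcal{X}_t)$ is indeed a polytope (the standard simplex), so that all the discrete-geometry machinery applies; and that $\ell_t(\cdot, u_t)$ does map into $\Delta(\mathcal{X}_{t+1})$, which follows since $P_t(\cdot|\tilde x_t, u_t)$ is a probability distribution for each $\tilde x_t$.
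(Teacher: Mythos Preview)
Your proposal is correct and follows essentially the same backward-induction argument as the paper's proof: establish that $V_{t+1}^A, V_{t+1}^B$ are continuous piecewise linear, use Lemma~\ref{lem:did:plcompose} to get piecewise linearity of $\hat q_t^i(\cdot,u_t)$, invoke Lemma~\ref{lem:did:cavmax} to obtain $\gamma_t$, and close with Lemma~\ref{lem: intpcont}. Your write-up is in fact more complete than the paper's, since you also explicitly verify the existence of the strategy maps $\lambda_t^{*A}$ and $\lambda_t^{*B}$ (via the Kamenica--Gentzkow inducibility remark and finiteness of $\Upsilon_t(\pi_t)$), which the paper's proof leaves implicit.
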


\begin{proof}
	%The proof using induction on time $t$ is outlined below.
    Induction on time $t$ is used for the proof.
 
	\textbf{Induction Invariant}: $V_{t}^A, V_{t}^B$ are well-defined continuous piecewise linear functions.
	
	\textbf{Induction Base}: The induction variant is clearly true for $t=T+1$ since $V_{T+1}^A, V_{T+1}^B$ are constant functions.
	
	\textbf{Induction Step}: Suppose that the induction invariant holds for $t+1$.
	\begin{itemize}
		\item Step 1: For each $u_t\in\mathcal{U}_t$, using the fact that $\ell_t(\pi_t, u_t)$ is affine in $\pi_t$, by Lemma \ref{lem:did:plcompose}, $q_t^A, q_t^B$ are continuous piecewise linear functions in $\pi_t$.
		
		\item Step 2: By Lemma \ref{lem:did:cavmax}, $\gamma_t$ is well-defined.
		
		\item Step 3: By Lemma \ref{lem: intpcont}, $V_t^A, V_t^B$ are continuous piecewise linear functions.
	\end{itemize}
 %\textcolor{red}{Details in ...}
 This completes the proof.
\end{proof}

\subsection{Extension}
In many real-world settings, the receivers have the option to quit the game at any time. Our model and results can be extended to finite horizon games where the receiver can decide to terminate the game at any time before time $T$.  

\begin{prop}\label{prop:dyninfodesignwquit}
	Let $\overline{\mathcal{U}}_t\subset \mathcal{U}_t$ be the set of actions that terminates the game at time $t$. If we define $V_t^i, q_t^i, \lambda_t^{*i}$ for each $i\in\{A, B\}, t\in\mathcal{T}$ as in \eqref{eq:dyninfo:dynprog}
	except that \eqref{eq:dyninfo:dynprog:1} is changed to
	\begin{align}
		&\quad~\hat{q}_t^i(\pi_t, u_t) \\
		&:= \sum_{\tilde{x}_t} r_t^i(\tilde{x}_t, u_t)\pi_t(\tilde{x}_t)  + \begin{cases}
			V_{t+1}^i(\ell_t(\pi_t, u_t))&\text{if }u_t\not\in \overline{\mathcal{U}}_t \\
			0 &\text{if }u_t\in \overline{\mathcal{U}}_t
		\end{cases}
	\end{align}
	for $i\in\{A, B\}$. Then the CBB strategies $(\lambda^{*A}, \lambda^{*B})$ form (the strategy part of) a PBE, and $V_1^A(\hat{\pi})$ and $V_1^B(\hat{\pi})$ are the equilibrium payoff for the principal and the receiver respectively in this PBE.
\end{prop}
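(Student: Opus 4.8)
The plan is to deduce Proposition~\ref{prop:dyninfodesignwquit} from Theorem~\ref{thm:dyninfodesign} by embedding the ``quitting'' game into an ordinary signal picking game with one extra, absorbing state. Introduce an auxiliary game $\tilde{\mathcal G}$ with state spaces $\tilde{\mathcal X}_t := \mathcal X_t\cup\{\dagger\}$, the same action spaces, message spaces enlarged by one symbol (still finite and sufficiently large), transition kernels $\tilde P_t$ that map $(x_t,u_t)$ to $\dagger$ whenever $x_t=\dagger$ or $u_t\in\overline{\mathcal U}_t$ and agree with $P_t$ otherwise, and rewards $\tilde r_t^i(x_t,u_t):=r_t^i(x_t,u_t)$ for $x_t\neq\dagger$ and $\tilde r_t^i(\dagger,\cdot):=0$. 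Then $\tilde{\mathcal G}$ satisfies Assumption~1, so Theorem~\ref{thm:dyninfodesign} produces value functions $\tilde V_t^i$, triangulations $\tilde\gamma_t$, and CBB strategies $(\tilde\lambda^{*A},\tilde\lambda^{*B})$ forming (the strategy part of) a PBE of $\tilde{\mathcal G}$.

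The heart of the argument is a backward induction on $t$ showing that, on the face $F_t:=\{\pi_t\in\Delta(\tilde{\mathcal X}_t):\pi_t(\dagger)=0\}\cong\Delta(\mathcal X_t)$, the recursion~\eqref{eq:dyninfo:dynprog} for $\tilde{\mathcal G}$ reduces exactly to the modified recursion in the statement; the induction invariant is ``$\tilde V_t^A,\tilde V_t^B$ restricted to $F_t$ equal $V_t^A,V_t^B$ of the Proposition, and $\tilde V_t^i$ vanishes at the vertex $\dagger$.'' For the step, observe that for $\pi_t\in F_t$ and $u_t\in\overline{\mathcal U}_t$ the point $\tilde\ell_t(\pi_t,u_t)$ is the mass at $\dagger$, where $\tilde V_{t+1}^i=0$, while for $u_t\notin\overline{\mathcal U}_t$ we have $\tilde\ell_t(\pi_t,u_t)=\ell_t(\pi_t,u_t)\in F_{t+1}$, where $\tilde V_{t+1}^i=V_{t+1}^i$ by hypothesis; hence the $\tilde{\mathcal G}$-versions of $\hat q_t^i,\Upsilon_t,\hat v_t^A,\hat v_t^B$ restricted to $F_t$ coincide with those of the Proposition. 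It remains to compare the concavification step, i.e.\ to check $\mathbb I(\hat v_t^A,\tilde\gamma_t)\big|_{F_t}=\mathrm{cav}\big(\hat v_t^A\big|_{F_t}\big)$. This ``restriction to an extreme face'' step is the only real subtlety, and where I expect to spend most care: since $F_t$ is a face of the simplex $\Delta(\tilde{\mathcal X}_t)$ it is a subcomplex of $\tilde\gamma_t$, so $\tilde\gamma_t$ restricts to a triangulation of $F_t$; moreover any convex combination of points of $\Delta(\tilde{\mathcal X}_t)$ landing in $F_t$ uses only points of $F_t$, whence $\mathrm{cvxg}(\hat v_t^A)\cap(F_t\times\mathbb R)=\mathrm{cvxg}(\hat v_t^A\big|_{F_t})$ and therefore $\mathrm{cav}(\hat v_t^A)\big|_{F_t}=\mathrm{cav}(\hat v_t^A\big|_{F_t})$; since linear interpolation only uses vertex values (which agree on $F_t$), this gives $\tilde\gamma_t\big|_{F_t}\in\arg\mathrm{cav}(\hat v_t^A\big|_{F_t})$ and $V_t^i$ as in the Proposition. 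The vertex-$\dagger$ claim is immediate, as all rewards and continuations there are $0$.

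With the recursions matched, the restrictions of $(\tilde\lambda^{*A},\tilde\lambda^{*B})$ to $\bigcup_t F_t$ are precisely $(\lambda^{*A},\lambda^{*B})$ of the Proposition: Bayesian updates preserve membership in $F_t$, so the only beliefs reached from the prior $\hat\pi\in F_1$ before termination lie in $\bigcup_t F_t$, and the principal's induced measure $\mathbb C(\pi_t,\tilde\gamma_t)$ is supported on $F_t$, so the required experiment exists in the original $\mathcal S_t$. Because play of the original quitting game before termination is in payoff-preserving bijection with play of $\tilde{\mathcal G}$ (post-absorption rewards being $0$), and the one-shot deviations available in the quitting game form a subset of those in $\tilde{\mathcal G}$ with matching payoffs, sequential rationality of the PBE of $\tilde{\mathcal G}$ transfers verbatim, yielding the claimed PBE with payoffs $V_1^A(\hat\pi),V_1^B(\hat\pi)$.

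A more direct alternative, avoiding the reduction, is to rerun the proof of Theorem~\ref{thm:dyninfodesign} with $V_{t+1}^i(\ell_t(\pi_t,u_t))$ replaced by the stated branch: it equals the piecewise linear $V_{t+1}^i$ on $\{u_t\notin\overline{\mathcal U}_t\}$ and the piecewise linear constant $0$ on $\{u_t\in\overline{\mathcal U}_t\}$, so each $\hat q_t^i(\cdot,u_t)$ is still continuous piecewise linear in $\pi_t$ (Lemma~\ref{lem:did:plcompose}), whence $\gamma_t$ is well defined (Lemma~\ref{lem:did:cavmax}) and $V_t^i$ are continuous piecewise linear (Lemma~\ref{lem: intpcont}); and in the sequential-rationality step the receiver still faces an MDP, one of whose actions leads to an absorbing $0$-reward state, while the principal, with the receiver's strategy fixed, still faces an MDP with continuation $0$ after termination, so the stochastic-control and \cite{kamenica2011bayesian} arguments of Theorem~\ref{thm:dyninfodesign} apply without change.
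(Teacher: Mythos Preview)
The paper's own proof is the single line ``Similar to Theorem~\ref{thm:dyninfodesign},'' which is precisely your second, direct alternative: replace the continuation $V_{t+1}^i(\ell_t(\pi_t,u_t))$ by $0$ on terminating actions, observe that each $\hat q_t^i(\cdot,u_t)$ remains continuous piecewise linear, and rerun the well-definedness argument (Proposition~\ref{lem: welldefined}) and the two sequential-rationality MDP arguments verbatim. Your primary route---embedding into an auxiliary game $\tilde{\mathcal G}$ with an absorbing cemetery state $\dagger$ and invoking Theorem~\ref{thm:dyninfodesign} as a black box---is genuinely different and more structural; your handling of the face-restriction of both the concave closure and the triangulation is correct. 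One caveat: as written, the embedding fixes \emph{some} $(\tilde\gamma_t)_t$ and shows that its restriction to $F_t$ lies in $\arg\mathrm{cav}(\hat v_t^A\big|_{F_t})$, which yields \emph{existence} of a CBB PBE satisfying the modified recursion; but the Proposition, like Theorem~\ref{thm:dyninfodesign}, asserts that \emph{every} solution of the recursion is a PBE, so you also need the converse---that any prescribed $(\gamma_t)_t$ on the faces extends to a valid $(\tilde\gamma_t)_t$ on the full simplices. This can be done by coning each $\gamma_t$ over the vertex $\dagger$ and checking inductively that $\tilde{\hat v}_t^A$ is the perspective function of $\hat v_t^A$ (so that its concave closure is exactly the coned interpolation), but it should be stated. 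The direct rerun avoids this bookkeeping, which is presumably why the paper prefers it.
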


\begin{proof}
    Similar to Theorem \ref{thm:dyninfodesign}.
\end{proof}

\section{EXAMPLES}\label{sec:did:examples}
We implement the sequential decomposition algorithm of Proposition \ref{prop:dyninfodesignwquit} in MATLAB for binary state spaces (i.e. $|\mathcal{X}_t| = 2$). We run the algorithm on the following examples of the signal picking game. 

\begin{example}\label{ex:did:farhadi}
	Consider the quickest detection game defined in \cite{farhadi2022dynamic}. In this game, the underlying state $X_t$ is binary and uncontrolled, with $\mathcal{X}_t = \{1, 2\}$. State $2$ is an absorbing state, i.e. $\Pr(X_{t+1} = 2 ~|~ X_{t} = 2) = 1$,
	whereas the system can jump from state $1$ to state $2$ at any time with probability $p$, i.e. $\Pr(X_{t+1} = 2 ~|~ X_{t} = 1) = p$
	where $p\in (0, 1)$.
	
	The receiver would like to detect (the epoch of) the jump from state $1$ to state $2$ as accurately as possible. At each time the receiver has two options: $U_t = j$ stands for declaring state $j$ for $j=1,2$. The instantaneous reward of the receiver is given by
	\begin{align}
		r_t^B(X_t, U_t) = \begin{cases}
			-1&\text{if }X_t = 1, U_t = 2\\
			-c&\text{if }X_t = 2, U_t = 1\\
			0&\text{otherwise}
		\end{cases}
	\end{align}
	where $c\in (0, 1)$. Once the receiver declares state $2$, the game ends immediately.
	
	The principal would like the receiver to stay in the system as long as possible. The instantaneous reward for the principal is
	\begin{align}
		r_t^A(X_t, U_t) = \begin{cases}
			1&\text{if }U_t = 1\\
			0&\text{otherwise}
		\end{cases}
	\end{align}
	
	Setting $p = 0.2, c = 0.1$, we obtained the $q_t^B$ and $V_t^A$ functions specified in Proposition \ref{prop:dyninfodesignwquit} in Figure \ref{fig:did:farhadi1}. The horizontal axis represents $\pi_t(1)$. In the figures for $V_t^A$ functions, the vertices of the triangulation $\gamma_t$ are labeled. The vertices represent the set of beliefs that the principal could induce, and they completely describe the principal's CBB strategy. If the vertex is labeled with red circles, the receiver will take action $U_t=1$ at this posterior belief. If, instead, the vertex is labeled with blue triangles, the receiver will take action $U_t=2$ at this posterior belief.
	
	From the figures, one can see that at any stage, there is only one possible belief that the principal would induce which leads to the receiver quitting the game (i.e. select $U_t=2$). This is consistent with the principal's objective of keeping the receiver in the system. Just like in static information design problems \cite{kamenica2011bayesian,bergemann2019information}, when it is better off for the receiver to declare change, i.e., quit, under the current belief, the principal would promise to tell the receiver that the state is $2$ with some probability $\tilde{p}$ when the state is indeed $2$, and tell the receiver nothing otherwise. In doing so, the receiver would believe that the state is $1$ with a higher probability when the principal does not tell the receiver anything. The principal chooses $\tilde{p}$ to be precisely the value for which the receiver is willing to stay in the system \cite{kamenica2011bayesian}.
	
	When $t$ is close to $T$, the end of the game, the principal would only prefer to declare state $2$ if they believe that $\pi_t(1)$ is very small. This is due to the fact that ``false alarms'' are costlier than delayed detection in this game. When $t$ is further away from $T$, the threshold of $\pi_t(1)$ for the principal to declare state $2$ becomes larger. This holds because when the game is close to end, the receiver has the ``safe'' option to declare state $1$ (at a small cost) until the end to avoid false alarms  (which are costly). However, this option is less preferable when the gap between $t$ and $T$ is large. 
	
	When $t$ is further away from $T$, the principal's value function seems to converge. This is due to the fact that the receiver has the option to quit the game and staying in the game is costly in general.
\end{example}

\begin{example}\label{ex:did:2}
	Consider a game between a principal and a detector. 
	In this game, the underlying state $X_t$ is binary and uncontrolled with $\mathcal{X}_t = \{-1, 1\}$. At any time, the system can jump to the other state with probability $p\in (0,1)$, i.e. 
	\begin{equation}
		\Pr(X_{t+1} = -j ~|~ X_{t} = j) = p,\qquad \forall j\in\{-1, 1\}.
	\end{equation}
	%where $p\in (0, 1)$.
	
	The receiver has three actions: $U_t = j$ stands for declaring state $j$ for $j=-1,1$. Both $U_t = -1$ and $U_t = +1$ terminate the game. In addition, the receiver can choose to wait at a cost with action $U_t = 0$. The instantaneous reward of the receiver for $c\in(0,1)$ is given by
	\begin{align}
		r_t^B(X_t, U_t) = \begin{cases}
			1&\text{if }X_t = U_t\\
			-c&\text{if }U_t = 0\\
			0&\text{otherwise}
		\end{cases}.
	\end{align}
 
	The principal would like the receiver to stay in the system as long as possible. The instantaneous reward for the principal is
	\begin{align}
		r_t^A(X_t, U_t) = \begin{cases}
			1&\text{if }U_t = 0\\
			0&\text{otherwise}
		\end{cases}
	\end{align}
	
	Setting $p = 0.2$ and $c = 0.15$, we obtained the $q_t^B$ and $V_t^A$ functions specified in Proposition \ref{prop:dyninfodesignwquit}---see Figure~\ref{fig:did:ex3}. The horizontal axis represents $\pi_t(-1)$. The figures follows the same interpretation as the figures in Example \ref{ex:did:farhadi}. (The markers for actions are different from previous figures, but they are self-explanatory.)
	
	Different from Example \ref{ex:did:farhadi}, the value functions and CBB strategies at equilibrium oscillate with a period of $4$ (given $p = 0.2, c = 0.15$) instead of converging as $t$ gets further away from the horizon $T$.
	
\end{example}

\section{DISCUSSION}\label{sec:did:dis}
% do not extend to external observation
% do not extend to multiple agents

Naturally, one may consider extending the above result to two settings: (a) when a public noisy observation of the state is available in addition to the principal's experiment, (b) when there are multiple receivers. However, our result is immediately extendable to neither setting. This is since the techniques we use in this paper depend heavily on the piecewise linear structure of $\hat{q}$ and $V$-functions in \eqref{eq:dyninfo:dynprog}, as well as the preservation of this piecewise linear structure under backward induction. Specifically, when the functions $\hat{q}_t^A, \hat{q}_t^B$ are piecewise linear, the concave closure of $\hat{v}_t^A$ can be expressed as a triangulation based interpolation (through Lemma \ref{lem:did:cavmax}), which in turn allows us to apply the same triangulation to $\hat{v}_t^B$, and thus ensuring the continuity and piecewise linearity of $V_t^B$. However, this structure does not appear in general in the extensions. 

We describe an attempt to extend Theorem \ref{thm:dyninfodesign} to settings (a) and (b) in the most straightforward way. In the case of setting (a), one needs to change the belief update in \eqref{eq:dyninfo:dynprog:1} from $\ell_t(\pi_t, u_t)$ to some other update function that incorporates the public observation. However, unlike $\ell_t(\pi_t, u_t)$, the new update function may not be linear in $\pi_t$. Therefore this procedure cannot preserve piecewise linear properties.

In the case of setting (b), $u_t$ will represent a vector of actions of all receivers, and one needs to change the definition of $\Upsilon_t(\pi_t)$ in \eqref{eq:dyninfo:dynprog:2} to be the set of mixed strategy Nash equilibrium (or alternatively correlated equilibrium) action profiles of the following stage game: Receiver $i$ chooses an action in $\mathcal{U}_t^i$, and receives payoff $\hat{q}_t^i(\pi_t, u_t)$. In this setting, $\Upsilon_t(\pi_t)$ is a set of probability measures on the product set $\mathcal{U}_t$. The new $\hat{v}_t^A$ function can then be given by
$$\hat{v}_t^A(\pi_t)=\underset{\eta_t\in \Upsilon_t(\pi_t)}{\arg\max}~ \sum_{\tilde{u}_t} q_t^A(\pi_t, \tilde{u}_t) \eta_t(\tilde{u}_t)$$

However, in this case, continuity and piecewise linearity of $\hat{q}_t$ are not enough to ensure that the value function $V_t^A$ possesses the same property. To see this, consider the following example with two receivers $B$ and $C$. Let $\mathcal{U}_t^B = \mathcal{U}_t^C = \{1, 2\}=\mathcal{X}_t = \{1, 2\}$. Let $p = \pi_t(1)$. 
%\textcolor{red}{-1? States are 1 or 2.} dwtang: Fixed
Then, all functions of $\pi_t$ can be expressed as a function of $p$. Suppose that
\begin{align*}
	q_t^B(\pi_t, u_t) &= \begin{cases}
		1&\text{if }u_t^B = u_t^C\\
		0&\text{otherwise}
	\end{cases},\\
	q_t^C(\pi_t, u_t) &= \begin{cases}
		p +1 & \text{if }u_t^B = 1, u_t^C = 2\\
		1&\text{if }u_t^B = 2, u_t^C = 1\\
		0&\text{otherwise}
	\end{cases}.
\end{align*}

It can be verified that, under either the concept of Nash equilibrium or correlated equilibrium, $\Upsilon_t(\pi_t)$ contains only one element: player $B$ plays action $1$ with probability $\frac{1}{2+p}$ and player $C$ plays their two actions with equal probability independent of player $B$'s action. Now suppose that
\begin{equation}
	q_t^A(\pi_t, u_t) = \begin{cases}
		p &u_t^B = 1\\
		0&\text{otherwise}
	\end{cases}.
\end{equation}

Then we have $\hat{v}_t^A(\pi_t) = \frac{p}{2+p}$ for $p\in [0, 1]$. Observe that $\hat{v}_t^A$ is a strictly concave function. Hence the concave closure of $\hat{v}_t^A$ is just $\hat{v}_t^A$ itself, which is not piecewise linear.

\section{CONCLUSION AND FUTURE WORK}\label{sec:concl}
%\subsection{Conclusions}
In this work, we formulated a dynamic information disclosure game, called the signal picking game, where the principal sequentially commit to a signal/experiment to communicate with the receiver. We showed that there exist equilibria where both the principal and the receiver make decisions based on the canonical belief instead of their respective full information. We also provided a sequential decomposition procedure to find such equilibria. 

Unlike the CIB-belief-based sequential decomposition procedures of \cite{ouyang2016dynamic,tavafoghi2016stochastic,vasal2019spbe,tang2022dynamic}, the sequential decomposition procedure of Theorem \ref{thm:dyninfodesign} always has a solution. The main reason is that the CIB belief in the signal picking game is strategy-independent, just like in \cite{nayyar2013common}. The result illustrates the critical difference between strategy-dependent and strategy-independent compression of information in dynamic games.

There are a few future research problems arising from this work. The first problem is to extend our result to infinite horizon games. The second problem is to extend our result to settings with multiple senders.

%%%%%%%%%%%%%%%%%%%%%%%%%%%%%%%%%%%%%%%%%%%%%%%%%%%%%%%%%%%%%%%%%%%%%%%%%%%%%%%%

\bibliographystyle{IEEEtran}
\bibliography{thesisbib}
%\printbibliography
\clearpage

\begin{figure}[!htp]
	\centering
	\includegraphics[width=.45\textwidth,keepaspectratio]{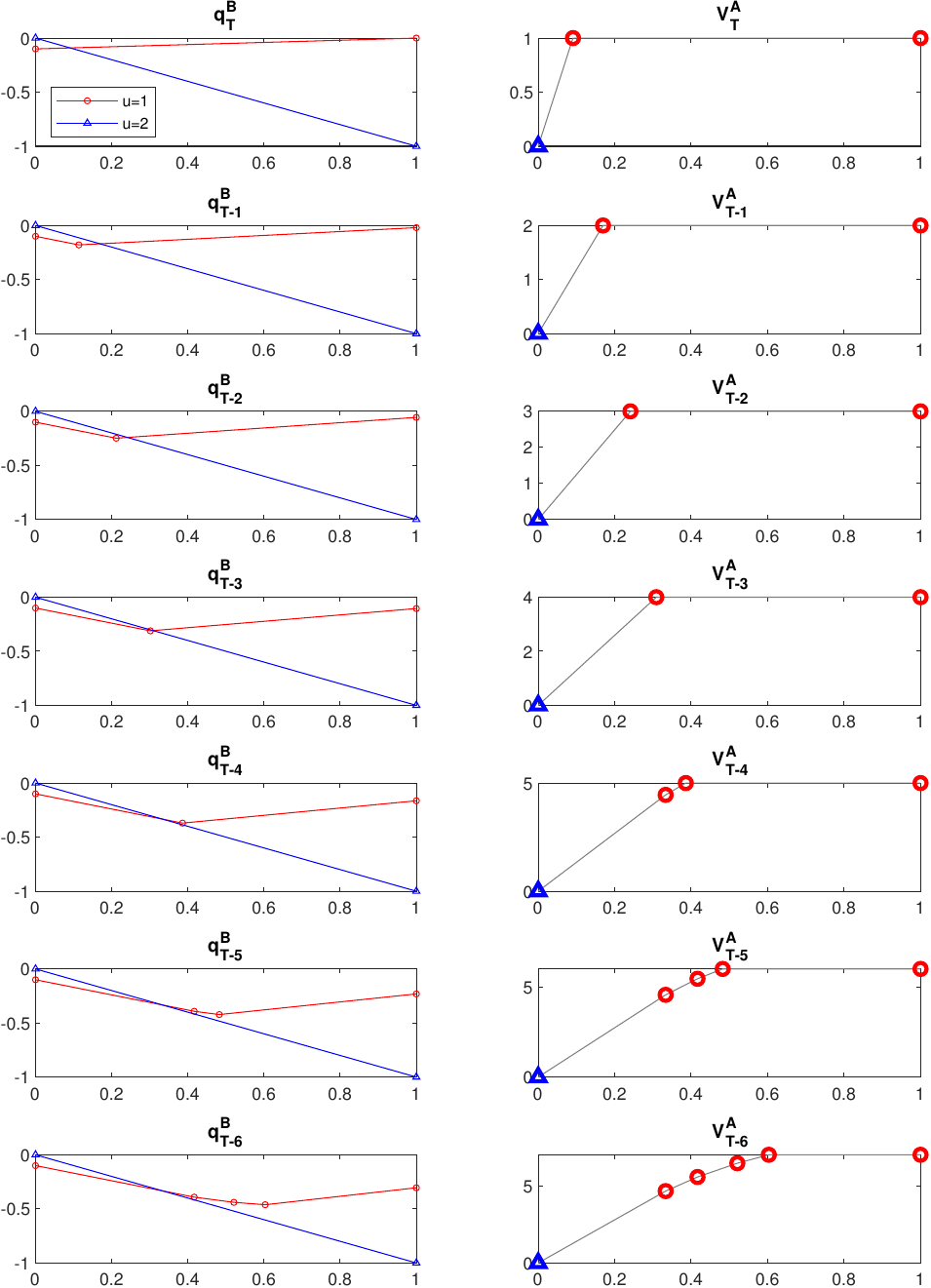}\\
	\includegraphics[width=.45\textwidth,keepaspectratio]{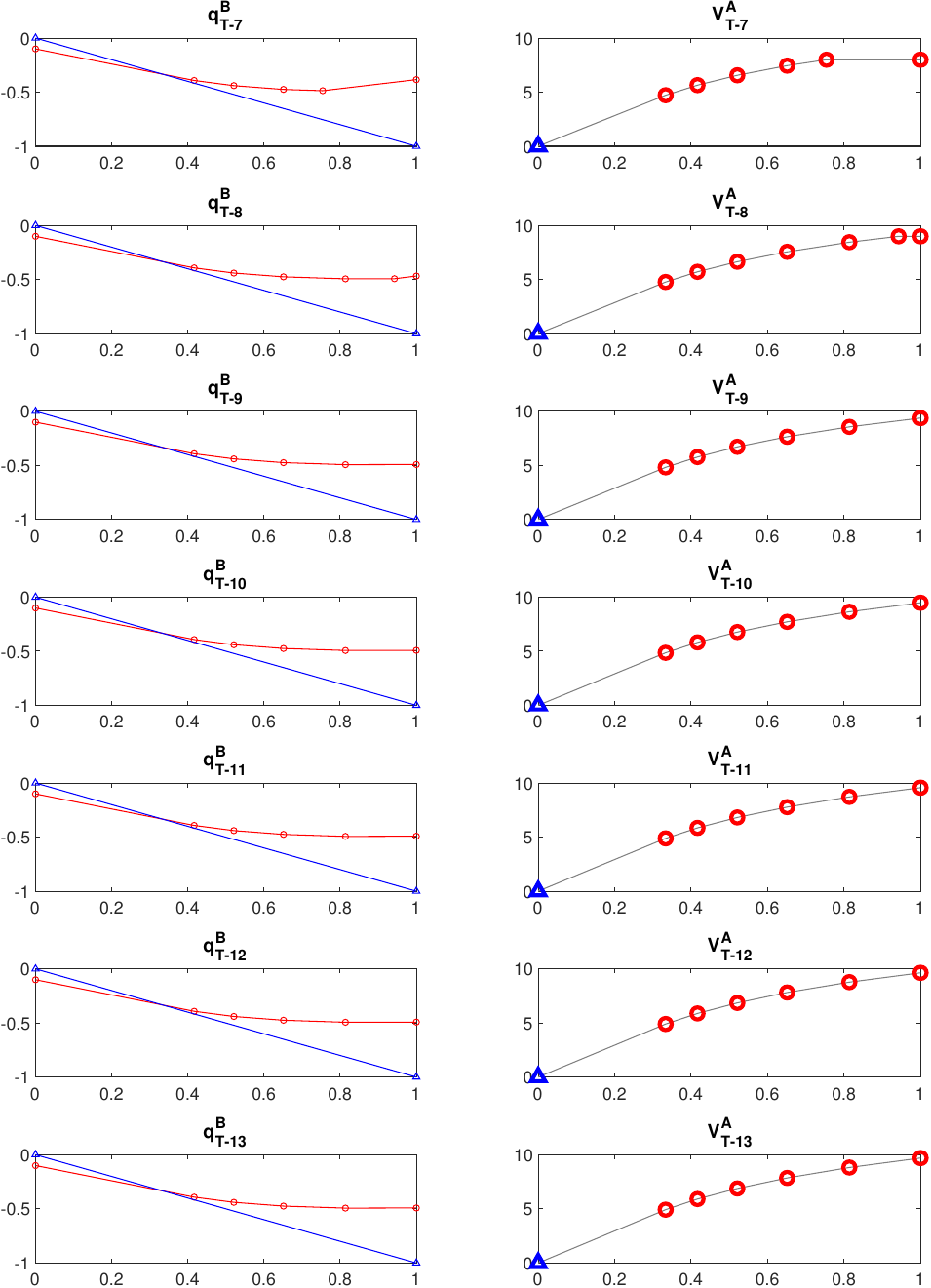}
	\caption{The $q_t^B$ and $V_t^A$ functions for Example \ref{ex:did:farhadi} with $p = 0.2, c = 0.1$ at times $t=T:T-13$.}\label{fig:did:farhadi1}
\end{figure}

\begin{figure}[!htp]
	\centering
	\includegraphics[width=.45\textwidth,keepaspectratio]{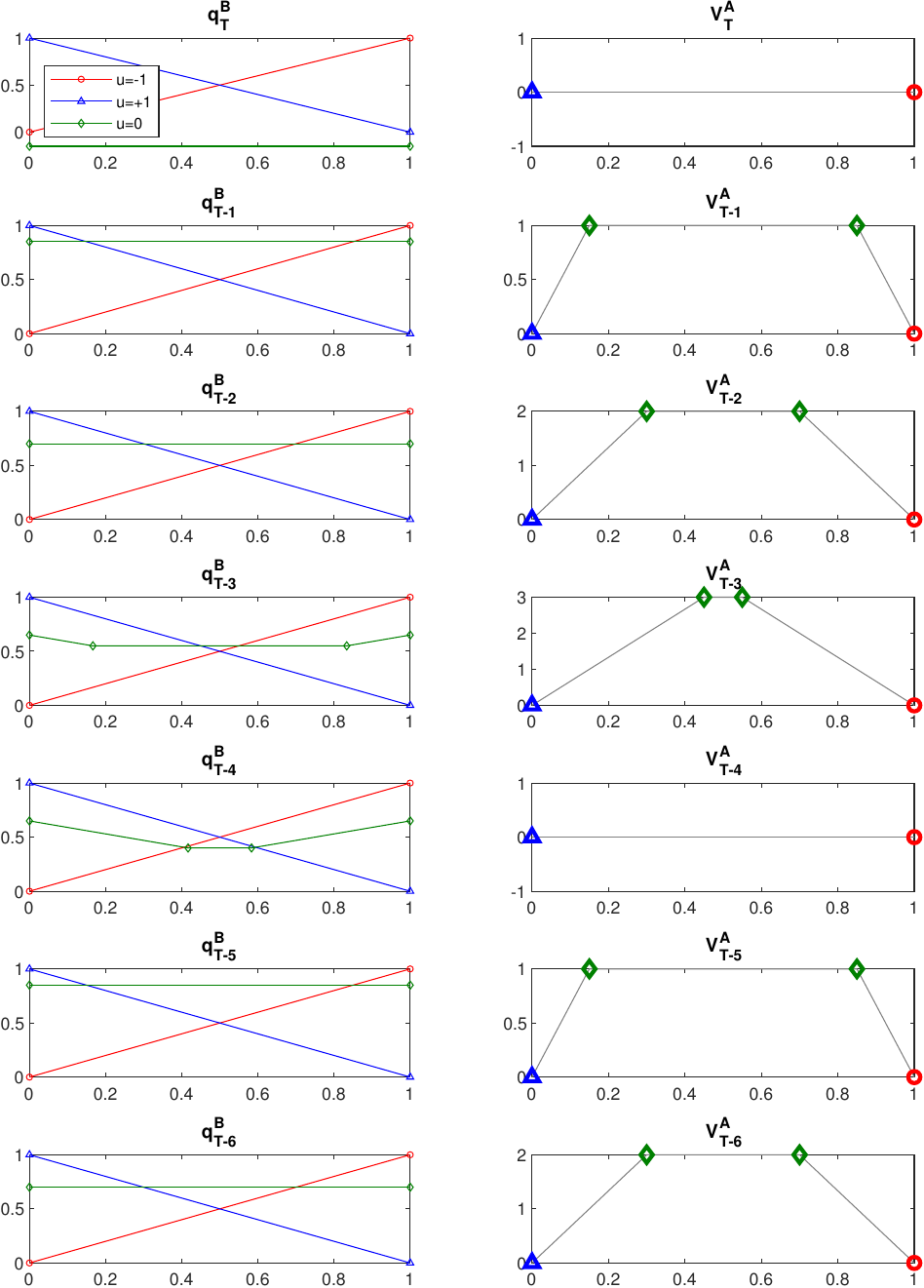}\\
	\includegraphics[width=.45\textwidth,keepaspectratio]{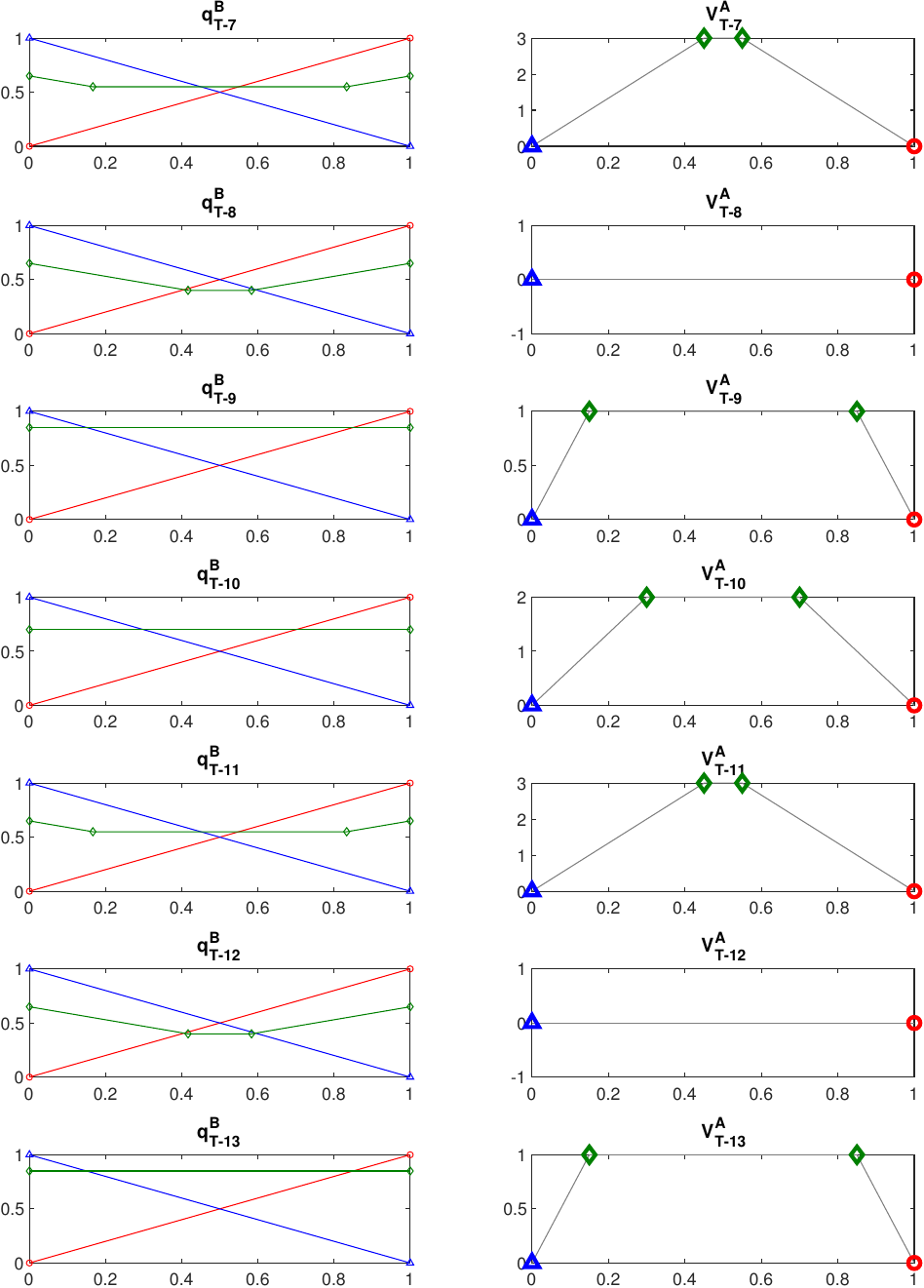}
	\caption{The $q_t^B$ and $V_t^A$ functions for Example \ref{ex:did:2} with $p = 0.2, c = 0.15$ at times $t=T:T-13$.}\label{fig:did:ex3}
\end{figure}

\onecolumn 
\appendix

\subsection{Proofs of Discrete Geometric Results}\label{app:disgeo}
\begin{proof}[Proof of Lemma \ref{lem:did:plcompose}]
	Let $C_1, \cdots, C_k$ be polytopes such that (i) $f$ is linear on each of $C_1, \cdots, C_k$ (ii) $C_1\cup \cdots \cup C_k = \Omega_2$.
	
	Since $\ell$ is an affine function, we have the pre-images $D_j = \ell^{-1}(C_j), j=1,\cdots, k$ to be polytopes as well.
	
	$f\circ \ell$ is linear on each $D_j$ (since it is the composition of two linear functions), and $D_1\cup \cdots \cup D_k = \Omega_1$. We conclude that $f\circ \ell$ is a piecewise linear function.
\end{proof}

\begin{proof}[Proof of Lemma \ref{lem: intpcont}]
	First, for any $\omega$, given a simplex $C$ such that $\omega\in C$, there is a unique way to represent $\omega$ as a convex combination of vertices of $C$.
	
	Suppose that $\omega$ is in both simplices $C$ and $C'$. Then $\omega$ is in $C\cap C'$, which is a face of both $C$ and $C'$. Since $C\cap C'$ is a simplex, $\omega$ can be uniquely represented as a convex combination of vertices of $C\cap C'$. Since the set of vertices of $C\cap C'$ is a subset of the vertices of both $C$ and $C'$, we conclude that the above representation is also the unique way of representing $\omega$ as a convex combination of vertices of $C$ (and of $C'$). We conclude that for any $\omega$, there is a unique way to represent $\omega$ as a convex combination of vertices of any simplex in $\gamma$. Hence $\mathbb{I}(f, \gamma)$ is well defined.
	
	For any simplex $C\in\gamma$, $\mathbb{I}(f, \gamma)$ is linear on $C$. Since the number of simplices in $\gamma$ is finite and their union is $\Omega$, we conclude that $\mathbb{I}(f, \gamma)$ is a continuous piecewise linear function on $\Omega$.
\end{proof}

\begin{proof}[Proof of Lemma \ref{lem:did:cavmax}]
	For $j=1,\cdots, k$, let $C_{j1}, \cdots, C_{jm_j}$ be polytopes corresponding to the piecewise linear function $f_j$ under Definition \ref{def:did:piecewiselinear}, i.e. $f_j$ is linear on each of $C_{j1}, \cdots, C_{jm_j}$ and $C_{j1}\cup \cdots\cup C_{jm_j} = \Omega$. Define
	\begin{equation}
		\mathcal{S} = \{ C_{1 i_1}\cap C_{2 i_2}\cap \cdots \cap C_{k i_k} : 1\leq i_1 \leq m_1  , \cdots, 1\leq i_k\leq m_k \}
	\end{equation}
	
	$\mathcal{S}$ is a finite collection of polytopes. All of $f_1, \cdots, f_k$ are linear on each element of $\mathcal{S}$. The union of $\mathcal{S}$ equals $\Omega$. Define
	\begin{align}
		A_j &:= \{\omega\in\Omega: f_j(\omega) \geq f_{j'}(\omega)\quad\forall j'=1,\cdots, k  \},\\
		\mathcal{S}_j &= \{F \cap A_j : F\in\mathcal{S} \}.
	\end{align}
	
	$\mathcal{S}_j$ is the collection of subsets where $f_j$ is (one of) the maximum among $f_1, \cdots, f_k$. $\mathcal{S}_j$ is also a finite collection of polytopes, since each $F \cap A_j$ is a subset of $F$ that satisfy certain linear constraints.
	
	Similarly, let $D_{j1}, \cdots, D_{jn_j}$ be polytopes corresponding to the piecewise linear function $\rho_j$. Define
	\begin{equation}
		\mathcal{R}_j = \{F \cap D_{ji}: F\in \mathcal{S}_j, 1\leq i\leq n_j\}
	\end{equation}
	
	For each polytope $F \in \mathcal{R}_j$, $\rho_j$ is linear on $F$, and $f_j$ is (one of) the maximum among $f_1, \cdots, f_k$ for all points in $F$. The union of $\mathcal{R}_j$ equals $A_j$.
	
	Let $\mathcal{P}$ be the set of vertices of polytopes in $\mathcal{R}_1 \cup \cdots \cup \mathcal{R}_k$. $\mathcal{P}$ is a finite set. Define $\mathcal{B}\subset \Omega \times \mathbb{R}$ by $ \mathcal{B} = \{(\omega, \Psi(\omega)): \omega \in\mathcal{P} \}$. Let $\mathcal{Z}$ be the convex hull of $\mathcal{B}$. We have $\mathcal{Z}$ to be a polytope with its vertices contained in $\mathcal{B}$. 
	
	Let $\hat{\Psi}$ be the concave closure of $\Psi$.
	We will show that the function $\hat{\Psi}$ is represented by the upper face of $\mathcal{Z}$. Then we obtain a triangulation of $\Omega$ by projecting a triangulation of $\mathcal{Z}$ in a similar way to the construction of regular triangulations (see Section 2.2 of \cite{de2010triangulations}).
	
	\textbf{Step 1:} Prove that $\hat{\Psi}(\omega) = \max \{y: (\omega, y)\in \mathcal{Z}\}$.
	
	Define $\overline{\Psi}(\omega) := \max \{y: (\omega, y)\in \mathcal{Z}\}$. $\overline{\Psi}$ is a concave function.
	
	It is clear that $\mathcal{Z} \subset \mathrm{cvxg}(\Psi)$, since $\mathcal{B}$ is a subset of the graph of $\Psi$. Therefore $\overline{\Psi}(\omega)\leq \hat{\Psi}(\omega)$.
	
	Consider any $\omega\in\Omega$. Let $j^*$ be such that $j^* \in \Upsilon(\omega)$ and $\Psi(\omega) = \rho_{j^*}(\omega)$. Then $\omega\in A_{j^*}$. We have $\omega\in F$ for some $F \in \mathcal{R}_{j^*}$. Let $\omega_1, \cdots, \omega_m \in \mathcal{P}$ be the vertices of $F$. We can write $\omega=\alpha_1 \omega_1 + \cdots + \alpha_k \omega_m$ for some $\alpha_1, \cdots, \alpha_m\geq 0, \alpha_1 + \cdots + \alpha_m = 1$. 	
	Since $\rho_{j^*}$ is linear on $F$ we have
	\begin{equation}\label{eq:did:hsg}
		\Psi(\omega) = \rho_{j^*}(\omega) = \alpha_1 \rho_{j^*}(\omega_1)+ \cdots + \alpha_k \rho_{j^*}(\omega_m)
	\end{equation}
	
	Since $\omega_1, \cdots, \omega_m \in F$ and $F\subset A_{j^*} $. By definition, $j^*\in \Upsilon(\omega_i)$ for all $i=1,\cdots,m$. Therefore $\rho_{j^*}(\omega_i) \leq \Psi(\omega_i)$ for all $i=1,\cdots,m$. Consequently, combining \eqref{eq:did:hsg} we have
	\begin{equation}
		\Psi(\omega) \leq \alpha_1 \Psi(\omega_1) + \cdots + \alpha_k \Psi(\omega_m)
	\end{equation}
	
	Given that $(\omega, \alpha_1 \Psi(\omega_1) + \cdots + \alpha_k \Psi(\omega_k)) \in \mathcal{Z}$, we have
	\begin{equation}
		\alpha_1 \Psi(\omega_1) + \cdots + \alpha_k \Psi(\omega_k) \leq \overline{\Psi}(\omega)
	\end{equation} 
	
	Hence $\Psi(\omega)\leq \overline{\Psi}(\omega)$. Therefore, $\overline{\Psi}$ is a concave function above $h$. Since the concave closure $\hat{\Psi}(\omega)$ is the smallest concave function above $h$, we conclude that $\hat{\Psi}(\omega)\leq \overline{\Psi}(\omega)$ for all $\omega\in\Omega$.
	
	Therefore $\hat{\Psi} = \overline{\Psi}$, completing the proof of Step 1.
	
	\textbf{Step 2:} Construct the triangulation $\gamma$ and show that $\hat{\Psi} = \mathbb{I}(h, \gamma)$.
	
	Let $\mathcal{A}\subset \Omega \times \mathbb{R}$ be the graph of $\hat{\Psi}$. $\mathcal{A}$ is also the union of upper faces of $\mathcal{Z}$. Let $\vartheta$ be a point set triangulation (as defined in Def. 2.2.1 in \cite{de2010triangulations}) of the finite point set $\mathcal{B}$. (A point set triangulation of a finite set of points always exists. See Section 2.2.1 of \cite{de2010triangulations}.) Let $\hat{\vartheta}$ be the restriction of $\vartheta$ to $\mathcal{A}$, i.e. $\hat{\vartheta}:= \{F: F\subset \mathcal{A}, F\in \vartheta \}$. It can be shown that $\hat{\vartheta}$ is a simplicial complex (i.e. a polyhedral complex where all polytopes are simplices. See Def. 2.1.5 in \cite{de2010triangulations}.) with vertices contained in $\mathcal{A} \cap \mathcal{B}$.
	
	Since $\mathcal{A}$ is the upper convex hull of $\mathcal{Z}$, the projection map $\mathrm{proj}_{\Omega}: \Omega\times \mathbb{R} \mapsto \Omega,  (\omega, y)\mapsto \omega$ is a bijection between $\mathcal{A}$ and $\Omega$. Let $\gamma$ be the projection of $\hat{\vartheta}$ on to $\Omega$, i.e. $\gamma = \{\mathrm{proj}_{\Omega}(F): F\in \hat{\vartheta} \}$. We conclude that $\gamma$ is a simplical complex that spans $\Omega$, i.e. a triangulation of $\Omega$. 
	
	The inverse map $\mathrm{proj}_{\Omega}^{-1}: \Omega \mapsto \mathcal{A}$ is a piecewise linear map that is linear on each simplex in $\gamma$. Therefore we have $\hat{\Psi}(\omega) = \mathbb{I}(\hat{\Psi}, \gamma)(\omega)$ for all $\omega\in\Omega$. Since the vertices of $\hat{\vartheta}$ are contained in both $\mathcal{A}$ and $\mathcal{B}$, we have $\hat{\Psi}(\omega) = \Psi(\omega)$ for each vertex $\omega$ of $\gamma$. (Recall that $\mathcal{B}$ is a subset of the graph of $h$ and $\mathcal{A}$ is the graph of $\hat{\Psi}$.) Therefore we have $\hat{\Psi}(\omega) = \mathbb{I}(\Psi, \gamma)(\omega)$ for all $\omega\in\Omega$.
\end{proof}

\subsection{Proof of Main Results}\label{app:main}

\begin{lemma}\label{lem:canonicalbelief}
	There exist a belief system $\mu^* = (\mu^{*A}, \mu^{*B})$ such that (i) $\mu^*$ is consistent with any strategy profile $(g^A, g^B)$; (ii) the canonical belief system $\kappa$ is the marginals of $\mu^*$.
\end{lemma}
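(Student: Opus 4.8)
The plan is to build $\mu^*$ by hand as a strategy-independent object, mirroring the forward recursion already used for the canonical marginal system $\kappa$ in Definitions \ref{def:dyninfo:bayesupdate}--\ref{def:canonicalbelief}, but carrying the full state trajectory $X_{1:t}$ instead of just $X_t$. Concretely, I would set $\mu_1^{*A} := \hat\pi$ on $\mathcal{X}_1$; given the trajectory belief $\mu_t^{*A}(h_t^A)\in\Delta(\mathcal{X}_{1:t})$, define the receiver's post-experiment belief by $\mu_t^{*B}(x_{1:t}) \propto \mu_t^{*A}(x_{1:t})\,\sigma_t(m_t\mid x_t)$ whenever the normalizing constant is nonzero, and by the uniform distribution on $\mathcal{X}_{1:t}$ otherwise; and then push forward through the transition by $\mu_{t+1}^{*A}(x_{1:t+1}) := \mu_t^{*B}(x_{1:t})\,P_t(x_{t+1}\mid x_t, u_t)$ (no renormalization is needed at this step, as the result is automatically a distribution). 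This defines $\mu^*$ on every element of $\mathcal{H}_t^A$ and $\mathcal{H}_t^B$, for every $t$, with no reference to any strategy.

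For property (ii) I would argue by induction on $t$ that the $\mathcal{X}_t$-marginal of $\mu_t^{*i}(h_t^i)$ is $\kappa_t^i(h_t^i)$. The base case is immediate since $\mathcal{X}_{1:1}=\mathcal{X}_1$. The step rests on the observation that the two per-period kernels involved, $\sigma_t(\cdot\mid x_t)$ and $P_t(\cdot\mid x_t,u_t)$, depend on the whole trajectory only through the current state $x_t$: hence marginalization over $x_{1:t-1}$ commutes with the $M_t$-update (the normalizing constant collapses to $\sum_{x_t}\pi_t^A(x_t)\sigma_t(m_t\mid x_t)$ with $\pi_t^A$ the marginal of $\mu_t^{*A}$, and the updated marginal is exactly $\xi_t(\cdot\mid\pi_t^A,\sigma_t,m_t)$), and marginalization over $x_{1:t}$ commutes with the $U_t$-transition (yielding $\ell_t(\pi_t^B,u_t)$). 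The one thing to check is that the zero-denominator fallback in the construction of $\mu^*$ has uniform $\mathcal{X}_t$-marginal, so that it agrees with the uniform convention adopted for $\xi_t$ in Definition \ref{def:dyninfo:bayesupdate}; taking ``uniform on $\mathcal{X}_{1:t}$'' achieves this.

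For property (i) I would fix an arbitrary profile $g=(g^A,g^B)$ and a history $h_t^A$ (the $h_t^B$ case being identical) with $\Pr^g(H_t^A=h_t^A)>0$, and write out the law of $(X_{1:t},\Sigma_{1:t-1},M_{1:t-1},U_{1:t-1})$ induced by $g$ from the game's generative model. This joint probability factors into (a) strategy-determined terms---with pure strategies, indicators of the events $\{\sigma_s=g_s^A(h_s^A)\}$ and $\{u_s=g_s^B(h_s^B)\}$ for $s<t$ (for behavioral strategies, the corresponding conditional probabilities), where $h_s^A,h_s^B$ are the public sub-histories embedded in $h_t^A$---and (b) the purely physical terms $\hat\pi(x_1)\prod_{s=1}^{t-1}\sigma_s(m_s\mid x_s)P_s(x_{s+1}\mid x_s,u_s)$. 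The crucial point is that the factors in (a) are functions of the conditioning variables $(\sigma_{1:t-1},m_{1:t-1},u_{1:t-1})$ alone and do not involve $x_{1:t}$, so they cancel between numerator and denominator when one forms $\Pr^g(X_{1:t}=\cdot\mid h_t^A)$; what survives is the normalized version of (b), which is precisely $\mu_t^{*A}(h_t^A)$ by construction (and the denominator is nonzero since $\Pr^g(H_t^A=h_t^A)>0$). Hence $(g,\mu^*)$ satisfies Bayes' rule wherever the denominator is nonzero, for every $g$; off-path histories impose no further constraint and $\mu^*$ is already defined there. This is the usual strategy-independence of beliefs in dynamic games with symmetric information, cf.\ \cite{kumar1986stochastic,nayyar2013common}.

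I expect the genuine content to lie entirely in the factorization-and-cancellation step of (i): once one recognizes that every decision in the game is measurable with respect to the commonly observed public history, the strategy-dependent factors are forced to cancel, and the rest is bookkeeping. The only subtle point is aligning the off-path / zero-denominator conventions of $\mu^*$ and of $\xi_t$, which is what makes property (ii) hold at \emph{every} history rather than merely on the equilibrium path.
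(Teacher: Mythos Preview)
Your proposal is correct and follows essentially the same approach as the paper: the recursive construction of $\mu^*$ is identical, property (ii) is obtained by the same marginalization-commutes-with-update induction, and for property (i) the paper carries out precisely the cancellation of strategy-dependent factors you describe, only organized as a step-by-step induction on $t$ rather than a single global factorization. You are in fact slightly more careful than the paper about the zero-denominator convention needed to make (ii) hold at every history.
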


\begin{proof}[Proof of Lemma \ref{lem:canonicalbelief}]
	Define $\mu_t^{*i}: \mathcal{H}_t^i\mapsto \Delta(\mathcal{X}_{1:t})$ recursively through the following:
	\begin{itemize}
		\item $\mu_1^{*A}(h_1^A) := \hat{\pi}$.
		\item $\mu_t^{*B}(x_{1:t}|h_t^B) := \dfrac{\mu_t^{*A}(x_{1:t}|h_t^A)\sigma_t(m_t|x_t) }{\sum_{\tilde{x}_{1:t}} \mu_t^{*A}(\tilde{x}_{1:t}|h_t^A)\sigma_t(m_t|\tilde{x}_t) }$
		\item $\mu_{t+1}^{*A}(x_{1:t+1}|h_{t+1}^A) := \mu_t^{*B}(x_{1:t}|h_t^B) \Pr(x_{t+1}|x_t, u_t)$
	\end{itemize}
	
	Through induction on $t$ it is clear that $\kappa$ is the marginal distribution derived from $\mu^*$. 
	
	It remains to show the consistency of $\mu^*$ w.r.t. any strategy profile $g=(g^A, g^B)$. We will also show it via induction:
	\begin{itemize}
		\item $\mu_1^{*A}$ is clearly consistent with any $g$ since it is defined to be the prior distribution of $X_1$.
		
		\item Suppose that $\mu_t^{*A}$ is consistent with $g$. Then consider any $h_t^B=(\sigma_{1:t}, m_{1:t}, u_{1:t-1})\in\mathcal{H}_t^B$ such that $\Pr^{g}(h_t^B) > 0$. Then we have $\Pr^{g}(h_t^A) > 0$, and $\mu_t^{*A}(x_{1:t}|h_t^A) = \Pr^g (x_{1:t}|h_t^A)$ follows by induction hypothesis. Therefore
		\begin{align*}
			&\quad~\Pr^g(x_{1:t}|h_t^B) = \Pr^g(x_{1:t}|\sigma_t, m_t, h_t^A) = \dfrac{\Pr^{g}(x_{1:t}, \sigma_t, m_t, h_t^A)}{\Pr^{g}(\sigma_t, m_t, h_t^A)}\\
			&=\dfrac{\Pr^{g}(x_{1:t}, h_t^A) \Pr^{g}(\sigma_t|x_{1:t}, h_t^A) \Pr^{g}(m_t|\sigma_t, x_{1:t}, h_t^A) }{\sum_{\tilde{x}_{1:t}} \Pr^{g}(\tilde{x}_{1:t}, h_t^A) \Pr^{g}(\sigma_t|\tilde{x}_{1:t}, h_t^A) \Pr^{g}(m_t|\sigma_t, \tilde{x}_{1:t}, h_t^A)}\\
			&=\dfrac{ \Pr^{g}(x_{1:t}, h_t^A) g_t^A(\sigma_t|h_t^A) \sigma_t(m_t|x_t) }{ \sum_{\tilde{x}_{1:t}} \Pr^{g}(\tilde{x}_{1:t}, h_t^A) g_t^A(\sigma_t|h_t^A) \sigma_t(m_t|\tilde{x}_{t})}\\
			&=\dfrac{ \Pr^{g}(x_{1:t}, h_t^A) \sigma_t(m_t|x_t) }{ \sum_{\tilde{x}_{1:t}} \Pr^{g}(\tilde{x}_{1:t}, h_t^A)  \sigma_t(m_t|\tilde{x}_{t})}=\dfrac{ \Pr^{g}(x_{1:t}| h_t^A) \sigma_t(m_t|x_t) }{ \sum_{\tilde{x}_{1:t}} \Pr^{g}(\tilde{x}_{1:t}| h_t^A)  \sigma_t(m_t|\tilde{x}_{t})}\\
			&=\dfrac{ \mu_t^{*A}(x_{1:t}|h_t^A) \sigma_t(m_t|x_t) }{ \sum_{\tilde{x}_{1:t}} \mu_t^{*A}(\tilde{x}_{1:t}| h_t^A)  \sigma_t(m_t|\tilde{x}_{t})} = \mu_t^{*B}(x_{1:t}|h_t^B)
		\end{align*}
		which means that $\mu_t^{*B}$ is consistent with $g$.
		
		\item Suppose that $\mu_t^{*B}$ is consistent with $g$. Then consider any $h_{t+1}^A = (\sigma_{1:t}, m_{1:t}, u_{1:t})\in \mathcal{H}_{t+1}^A$ such that $\Pr^g(h_{t+1}^A) > 0$. Then we have $\Pr^{g}(h_t^B) > 0$, and $\mu_t^{*B}(x_{1:t}|h_t^B) = \Pr^g (x_{1:t}|h_t^B)$ follows by induction hypothesis. Then we have
		\begin{align*}
			&\quad~\Pr^g(x_{1:t+1}|h_{t+1}^A) = \dfrac{\Pr^g(x_{1:t+1}, h_{t+1}^A)}{\Pr^g(h_{t+1}^A)}\\
			&=\dfrac{\Pr^g(x_{1:t}, h_t^B) \Pr^g(u_t|x_{1:t}, h_t^B) \Pr^g(x_{t+1}| x_{1:t}, u_t, h_t^B)}{\sum_{\tilde{x}_{1:t}} \Pr^g(\tilde{x}_{1:t}, h_t^B) \Pr^g(u_t|\tilde{x}_{1:t}, h_t^B) }\\
			&=\dfrac{\Pr^g(x_{1:t}, h_t^B) g_t^B(u_t| h_t^B) \Pr(x_{t+1}| x_{t}, u_t)}{\sum_{\tilde{x}_{1:t}} \Pr^g(\tilde{x}_{1:t}, h_t^B) g_t(u_t| h_t^B) }\\
			&=\dfrac{\Pr^g(x_{1:t}, h_t^B) }{\sum_{\tilde{x}_{1:t}} \Pr^g(\tilde{x}_{1:t}, h_t^B) } \cdot \Pr(x_{t+1}| x_{t}, u_t) = \mu_t^{*B}(x_{1:t}|h_t^B) \Pr(x_{t+1}| x_{t}, u_t)\\
			&=\mu_t^{*A}(x_{1:t+1}|h_{t+1}^A)
		\end{align*}
		which means that $\mu_{t+1}^{*A}$ is consistent with $g$.
	\end{itemize}
\end{proof}

\begin{proof}[Proof of Theorem \ref{thm:dyninfodesign}]
	Let $\mu^*$ be a belief system that satisfies Lemma \ref{lem:canonicalbelief}. It is shown by Lemma \ref{lem:canonicalbelief} that $\mu^*$ is consistent with any strategy profile $g$. Hence to show that $\lambda^*$ forms a CBB-PBE we only need to show sequential rationality.
	
	\textbf{Step 1:} Fixing the principal's strategy to be $\lambda^{*A}$, show that $\lambda_{\tau: T}^{*B}$ is sequentially rational at any $h_\tau^B\in \mathcal{H}_\tau^B$ at any time $\tau$, given the belief $\mu_\tau^{*B}(h_\tau^B)$.
	
	To prove Step 1, we argue that at $h_{\tau}^B$, the receiver is facing an MDP problem with state process $\Pi_t^B = \kappa_t^B(H_t^B)$ and action $U_t$ for $t\geq \tau$.
	
	First, we can write
	\begin{align*}
		&\quad~\E^{\mu_{\tau}^{*B}(h_\tau^B), \lambda_t^{*A}, g_{\tau:T}^B}\left[\sum_{t=\tau}^T r_t^B(X_t, U_t)\right]\\
		&=\E^{\mu_{\tau}^{*B}(h_\tau^B), \lambda_t^{*A}, g_{\tau:T}^B}\left[\sum_{t=\tau}^T \E^{\mu_{\tau}^{*B}(h_\tau^B), \lambda_t^{*A}, g_{\tau:T}^B}[r_t^B(X_t, U_t)| H_t^B, U_t]\right]
	\end{align*}
	where for any $h_t^B$ such that $\Pr^{\mu_{\tau}^{*B}(h_\tau^B), \lambda_t^{*A}, g_{\tau:T}^B}(h_t^B) > 0$ we have
	\begin{align*}\E^{\mu_{\tau}^{*B}(h_\tau^B), \lambda_t^{*A}, g_{\tau:T}^B}[r_t^B(X_t, U_t)| h_t^B, u_t]&=\sum_{\tilde{x}_{1:t}} r_t^B(\tilde{x}_t, u_t) \Pr^{\mu_{\tau}^{*B}(h_\tau^B), \lambda_t^{*A}, g_{\tau:T}^B}(\tilde{x}_{1:t}|h_t^B)\\
		&=\sum_{\tilde{x}_t} r_t^B(\tilde{x}_t, u_t)\pi_t^{B}(\tilde{x}_t)=:\tilde{r}_t^B(\pi_t^B, u_t)
	\end{align*}
	where $\pi_t^B:=\kappa_t^{B}(h_t^B)$. The second equality is true due to Lemma \ref{lem:canonicalbelief}.
	
	Therefore we can write
	\begin{align*}
		&\quad~\E^{\mu_{\tau}^{*B}(h_\tau^B), \lambda_t^{*A}, g_{\tau:T}^B}\left[\sum_{t=\tau}^T r_t^B(X_t, U_t)\right]=\E^{\mu_{\tau}^{*B}(h_\tau^B), \lambda_t^{*A}, g_{\tau:T}^B}\left[\sum_{t=\tau}^T \tilde{r}_t^B(\Pi_t^B, U_t) \right].
	\end{align*}
	
	We now show that $\Pi_{t}^B$ is a controlled Markov Chain controlled by $U_t$. By Definition \ref{def:canonicalbelief}, we have $\Pi_{t+1}^B =\xi_{t+1} (\Pi_{t+1}^A, \Sigma_{t+1}, M_{t+1})$, where $\Pi_{t+1}^A=\ell_t(\Pi_t^B, U_t), \Sigma_{t+1} = \lambda_{t+1}^{*B}(\Pi_{t+1}^A)$. Therefore we have
	\begin{align*}
		&\quad~ \Pr^{\mu_{\tau}^{*B}(h_\tau^B), \lambda_t^{*A}, g_{\tau:T}^B}(\pi_{t+1}^B|h_t^B, u_t)\\
		&=\sum_{\tilde{m}_{t+1}}  \bm{1}_{\{\pi_{t+1}^B = \xi_{t+1}(\pi_{t+1}^A, \sigma_{t+1}, \tilde{m}_{t+1}) \} } \Pr^{\mu_{\tau}^{*B}(h_\tau^B), \lambda_t^{*A}, g_{\tau:T}^B}(\tilde{m}_{t+1}|h_t^B, u_t)\\
		&=\sum_{\tilde{m}_{t+1}}  \bm{1}_{\{\pi_{t+1}^B = \xi_{t+1}(\pi_{t+1}^A, \sigma_{t+1}, \tilde{m}_{t+1}) \} } \sum_{\tilde{x}_{t+1}}\sigma_{t+1}(\tilde{m}_{t+1}|\tilde{x}_{t+1}) \Pr^{\mu_{\tau}^{*B}(h_\tau^B), \lambda_t^{*A}, g_{\tau:T}^B}(\tilde{x}_{t+1}|h_{t+1}^A)\\
		&=\sum_{\tilde{m}_{t+1}}  \bm{1}_{\{\pi_{t+1}^B = \xi_{t+1}(\pi_{t+1}^A, \sigma_{t+1}, \tilde{m}_{t+1}) \} } \sum_{\tilde{x}_{t+1}}\sigma_{t+1}(\tilde{m}_{t+1}|\tilde{x}_{t+1}) \pi_{t+1}^A(\tilde{x}_{t+1})
	\end{align*}
	where $\pi_{t+1}^A = \ell_t(\pi_t^B, u_t), \sigma_{t+1} = \lambda_{t+1}^{*A}(\pi_{t+1}^A)$. The last equality is true due to Lemma \ref{lem:canonicalbelief}.
	
	By construction, $\sigma_{t+1}= \lambda_{t+1}^{*A}(\pi_{t+1}^A)$ induces the distribution $\mathbb{C}(\pi_{t+1}^A, \gamma_{t+1})$ from $\pi_{t+1}^A$. This means that
	\begin{align*}
		&\quad~\sum_{\tilde{m}_{t+1}}  \bm{1}_{\{\pi_{t+1}^B = \xi_{t+1}(\pi_{t+1}^A, \sigma_{t+1}, \tilde{m}_{t+1}) \} } \sum_{\tilde{x}_{t+1}}\sigma_{t+1}(\tilde{m}_{t+1}|\tilde{x}_{t+1}) \pi_{t+1}^A(\tilde{x}_{t+1}) \\
		&= \mathbb{C}(\pi_{t+1}^A, \gamma_{t+1})(\pi_{t+1}^B)
	\end{align*}
	
	We conclude that
	\begin{align*}
		\Pr^{\mu_{\tau}^{*B}(h_\tau^B), \lambda_t^{*A}, g_{\tau:T}^B}(\pi_{t+1}^B|h_t^B, u_t)&= \mathbb{C}(\ell_t(\pi_{t}^B, u_t), \gamma_{t+1})(\pi_{t+1}^B).
	\end{align*}
	
	In particular, this means that the conditional distribution of $\Pi_{t+1}^B$ given all of $(\Pi_{1:t}^B, U_{1:t}^B)$ is dependent only on $(\Pi_{t}^B, U_t)$, proving that $\Pi_{t}^B$ is a controlled Markov Chain controlled by $U_t$ for $t\geq \tau$. 
	
	Therefore, at $h_\tau^B$, the receiver faces an MDP problem with state $\Pi_t^B$, action $U_t$, instantaneous reward $\tilde{r}_t^B(\Pi_t^B, U_t)$ and transition kernel $\Pr(\pi_{t+1}^B| \pi_t^B, u_t) = \mathbb{C}(\ell_t(\pi_{t}^B, u_t), \gamma_{t+1})(\pi_{t+1}^B)$.
	
	Now, by construction, we know that
	\begin{align}
		\hat{v}_t^{B}(\pi_t^B) &= \max_{u_t} \left[\sum_{\tilde{x}_t} r_t^B(\tilde{x}_t, u_t)\pi_t(\tilde{x}_t)  + V_{t+1}^B(\ell_t(\pi_t, u_t))\right] \\
		&= \max_{u_t} \left[\tilde{r}_t^B(\pi_t^B, u_t) + \int \hat{v}_{t+1}^B(\cdot) \mathrm{d} \mathbb{C}(\ell_t(\pi_t^B, u_t) , \gamma_{t+1}) \right]\label{eq:dyninfo:bellman}
	\end{align}
	and $\lambda_t^{*B}(\pi_t^B)$ attains the maximum in \eqref{eq:dyninfo:bellman}. Therefore $\lambda_{\tau:T}^{*B}$ solves the Bellman equation for the MDP problem specified above, and hence is an optimal strategy. Furthermore, $V_1^B(\hat{\pi}) = \int \hat{v}_1^B(\cdot) \mathrm{d} \mathbb{C}(\hat{\pi}, \gamma_t)$ is the optimal total expected payoff for the receiver when the principal plays $\lambda^{*A}$.
	
	\textbf{Step 2:} Fixing the receiver's strategy to be $\lambda^{*B}$, show that $\lambda_{\tau: T}^{*A}$ is sequentially rational at any $h_\tau^A\in \mathcal{H}_\tau^A$ at any time $\tau$, given the belief $\mu_\tau^{*A}(h_\tau^A)$.
	
	Similar to Step 1, we argue that at $h_\tau^A$, the principal is facing an MDP problem with state process $\Pi_t^A=\kappa_t^A(H_t^A)$ and action $\Sigma_t$ for $t\geq \tau$.
	
	First, we write
	\begin{align*}
		\E^{\mu_{\tau}^{*A}(h_\tau^A), g_{\tau:T}^A, \lambda_t^{*B}}\left[\sum_{t=\tau}^T r_t^A(X_t, U_t)\right]&=\E^{\mu_{\tau}^{*A}(h_\tau^A), g_{\tau:T}^A, \lambda_t^{*B}}\left[\sum_{t=\tau}^T \E^{\mu_{\tau}^{*A}(h_\tau^A), g_{\tau:T}^A, \lambda_t^{*B}}[r_t^A(X_t, U_t)| H_t^A, \Sigma_t]\right]
	\end{align*}
	
	Given that the receiver uses the CBB strategy $\lambda^{*B}$, we know that $U_t=\lambda_t^{*B}(\Pi_t^B)$ where $\Pi_t^B=\xi_t(\Pi_{t}^A, \Sigma_t, M_t)$. For any $h_t^A$ such that $\Pr^{\mu_{\tau}^{*A}(h_\tau^A), g_{\tau:T}^A, \lambda_t^{*B}}(h_t^A) > 0$ we have
	\begin{align*}
		&\quad~\E^{\mu_{\tau}^{*A}(h_\tau^A), g_{\tau:T}^A, \lambda_t^{*B}}[r_t^A(X_t, U_t)| h_t^A, \sigma_t]\\
		&=\sum_{\tilde{x}_{t}, \tilde{m}_t} r_t^A(\tilde{x}_t, \lambda_t^{*B}(\xi_t(\pi_t^A, \sigma_t, \tilde{m}_t)) ) \Pr^{\mu_{\tau}^{*A}(h_\tau^A), g_{\tau:T}^A, \lambda_t^{*B}}(\tilde{m}_t, \tilde{x}_t|h_t^A, \sigma_t) \\
		&=\sum_{\tilde{x}_{t}, \tilde{m}_t} r_t^A(\tilde{x}_t, \lambda_t^{*B}(\xi_t(\pi_t^A, \sigma_t, \tilde{m}_t)) ) \sigma_{t}(\tilde{m}_t|\tilde{x}_t) \Pr^{\mu_{\tau}^{*A}(h_\tau^A), g_{\tau:T}^A, \lambda_t^{*B}}(\tilde{x}_t|h_t^A, \sigma_t)\\
		&=\sum_{\tilde{x}_{t}, \tilde{m}_t} r_t^A(\tilde{x}_t, \lambda_t^{*B}(\xi_t(\pi_t^A, \sigma_t, \tilde{m}_t)) ) \sigma_{t}(\tilde{m}_t|\tilde{x}_t) \pi_t^A(\tilde{x}_t)=:\tilde{r}_t^A(\pi_t^A, \sigma_t)
	\end{align*}
	where $\pi_t^A:=\kappa_t^{A}(h_t^A)$. The third equality is true due to Lemma \ref{lem:canonicalbelief}.
	
	Therefore we can write
	\begin{align*}
		&\quad~\E^{\mu_{\tau}^{*A}(h_\tau^A), g_{\tau:T}^A, \lambda_t^{*B}}\left[\sum_{t=\tau}^T r_t^A(X_t, U_t)\right]=\E^{\mu_{\tau}^{*A}(h_\tau^A), g_{\tau:T}^A, \lambda_t^{*B}}\left[\sum_{t=\tau}^T \tilde{r}_t^A(\Pi_t^A, \Sigma_t) \right].
	\end{align*}
	
	We now show that $\Pi_t^A$ is a controlled Markov process with action $\Sigma_t$: We know that 
	\begin{align*}
		\Pi_{t+1}^{A} = \ell_t(\Pi_t^B, U_t^B), \qquad U_t^B &= \lambda_t^{*B}(\Pi_t^B), \qquad \Pi_t^B = \xi_t(\Pi_{t}^A, \Sigma_t, M_t).
	\end{align*}
	Hence $\Pi_{t+1}^A$ is a function of $\Pi_t^A, \Sigma_t,$ and $M_t$. Furthermore, 
	\begin{align*}
		\Pr^{\mu_{\tau}^{*A}(h_\tau^A), g_{\tau:T}^A, \lambda_t^{*B}}(m_t|h_t^A, \sigma_t)&=\sum_{\tilde{x}_t} \sigma_t(m_t|\tilde{x}_t) \Pr^{\mu_{\tau}^{*A}(h_\tau^A), g_{\tau:T}^A, \lambda_t^{*B}}(\tilde{x}_t|h_t^A, \sigma_t)\\
		&=\sum_{\tilde{x}_t} \sigma_t(m_t|\tilde{x}_t) \pi_t^A(\tilde{x}_t).
	\end{align*}
	Therefore the conditional distribution of $\Pi_{t+1}^A$ given $(H_t^A, \Sigma_t)$ depends only on $(\Pi_t^A, \Sigma_t)$, proving that $\Pi_t^A$ is a controlled Markov process. We conclude that at $h_\tau^A$, the principal faces an MDP problem with state $\Pi_t^A$, action $\Sigma_t$, and instantaneous reward $\tilde{r}_t^A(\Pi_t^A, \Sigma_t)$ for $t\geq \tau$.
	
	Next we will show that $\lambda_{\tau:T}^{*A}$ is a dynamic programming solution of this MDP.
	
	\textbf{Induction Variant:} $V_{t}^A$, as defined in \eqref{eq:dyninfo:dynprog:5}, is the value function for this MDP.
	
	\textbf{Induction Step:} Suppose that $V_{t+1}^A$ is the value function for this MDP at time $t+1$ and consider the stage optimization problem at $\pi_t^A$.

	Note that the instantaneous cost can be written as
	\begin{align*}
		\tilde{r}_t^A(\pi_t^A, \sigma_t)&=\sum_{\tilde{x}_{t}, \tilde{m}_t} r_t^A(\tilde{x}_t, \lambda_t^{*B}(\xi_t(\pi_t^A, \sigma_t, \tilde{m}_t)) ) \sigma_{t}(\tilde{m}_t|\tilde{x}_t) \pi_t^A(\tilde{x}_t)\\
		&=\sum_{\tilde{m}_t}\left(\sum_{\tilde{x}_{t}} r_t^A(\tilde{x}_t, \lambda_t^{*B}(\xi_t(\pi_t^A, \sigma_t, \tilde{m}_t)) ) \dfrac{\sigma_{t}(\tilde{m}_t|\tilde{x}_t) \pi_t^A(\tilde{x}_t)}{\sum_{\hat{x}_t }  \sigma_{t}(\tilde{m}_t|\hat{x}_t) \pi_t^A(\hat{x}_t) } \right)\left(\sum_{\hat{x}_t }  \sigma_{t}(\tilde{m}_t|\hat{x}_t) \pi_t^A(\hat{x}_t)\right) \\
		&=\sum_{\tilde{m}_t}\left(\sum_{\tilde{x}_{t}} r_t^A(\tilde{x}_t, \lambda_t^{*B}(\xi_t(\pi_t^A, \sigma_t, \tilde{m}_t)) ) \xi_t(\pi_t^A, \sigma_t, \tilde{m}_t)(\tilde{x}_t) \right)\left(\sum_{\hat{x}_t }  \sigma_{t}(\tilde{m}_t|\hat{x}_t) \pi_t^A(\hat{x}_t)\right)\\
		&=\E\left[\sum_{\tilde{x}_t} r_t^A(\tilde{x}_t, \lambda_t^{*B}(\Pi_t^B)) \Pi_t^B(\tilde{x}_t)\Big| \pi_t^A, \sigma_t\right]
	\end{align*}
	where $\Pi_t^B$ is a random distribution that follows the distribution induced by $\sigma_t$ from $\pi_t^A$.
	
	Hence objective function for the stage optimization can be written as
	\begin{align*}
		\tilde{Q}_t^A(\pi_t^A, \sigma_{t})&= \tilde{r}_t^A(\pi_t^A, \sigma_{t}) + \E[V_{t+1}^A(\Pi_{t+1}^A)|\pi_t^A, \sigma_t]\\
		&=\tilde{r}_t^A(\pi_t^A, \sigma_{t}) + \E[V_{t+1}^A(\ell_t(\Pi_{t}^B, \lambda_t^{*B}(\Pi_t^B) ))|\pi_t^A, \sigma_t]\\
		&=\E\left[\tilde{v}_t^A(\Pi_t^B) | \pi_t^A, \sigma_t \right]
	\end{align*}
	where
	\begin{align*}
		\tilde{v}_t^A(\pi_t) := \sum_{\tilde{x}_t} r_t^A(\tilde{x}_t, \lambda_t^{*B}(\pi_t)) \pi_t(\tilde{x}_t) + V_{t+1}^A(\ell_t(\pi_{t}^B, \lambda_t^{*B}(\pi_t^B)))
	\end{align*}
	
	By construction of $\lambda_t^{*B}$, we know that $\tilde{v}_t^A = \hat{v}_t^A$ (defined in \eqref{eq:dyninfo:dynprog:vta}).
	
	Therefore, the stage optimization problem can be reformulated as:
	\begin{equation}\label{eq:dyninfo:stageprob}
		\begin{split}
			\max_{\nu_t\in \Delta_f (\Delta(\mathcal{X}_t)) }&\quad \int \hat{v}_t^A(\cdot)\mathrm{d} \nu_t\\
			\text{subject to}&\quad\nu_t\text{ is inducible from $\pi_t^A$}
		\end{split}\tag{SP}
	\end{equation}
	and the optimal signal is any signal that induces an optimal distribution $\nu_t^*$ of \eqref{eq:dyninfo:stageprob} from $\pi_t^A$.
	
	By the seminal result on one-shot information design in \cite{kamenica2011bayesian}, we know that the optimal value of \eqref{eq:dyninfo:stageprob} is given by the concave closure of the function $\hat{v}_t^A$ evaluated at $\pi_t^A$. 
	
	By construction, we have $V_t^A$ to be the concave closure of $\hat{v}_t^A$. Furthermore, $\lambda_t^{*A}(\pi_t^A)$ is assumed to induce the distribution $\nu_t=\mathbb{C}(\pi_t^A, \gamma_t)$, where we know that $\int \hat{v}_t^A(\cdot)\mathrm{d} \mathbb{C}(\pi_t^A, \gamma_t) = V_t^A(\pi_t^A)$. Hence $\lambda_t^{*A}(\pi_t^A)$ is an optimal solution for the stage optimization problem, and $V_t^A$ is the value function at time $t$, proving the induction step.\\
	
	We conclude that $\lambda_{\tau:T}^{*A}$ is an optimal strategy for the principal at $h_\tau^A$ given the belief system $\mu_\tau(h_\tau^A)$ and the receiver's strategy $\lambda^{*B}$. Furthermore, $V_1^A(\hat{\pi})$ is the optimal total expected payoff for the principal when the receiver plays $\lambda^{*A}$.
	Hence we have completed the proof of sequential rationality. 
	
\end{proof}

\end{document}